\documentclass[conference]{IEEEtran}

\usepackage{graphicx}
\usepackage{amsfonts}
\usepackage{amsmath}
\usepackage{amssymb}
\usepackage{color}

\newtheorem{thm}{Theorem}

\newtheorem{lem}[thm]{Lemma}
\newtheorem{proof}[thm]{proof}

\newtheorem{defn}[thm]{Definition}
\newtheorem{rem}[thm]{Remark}
\newtheorem{exam}[thm]{Example}

\ifCLASSINFOpdf

\else

\fi


\hyphenation{op-tical net-works semi-conduc-tor}

\begin{document}

\title{Erasure codes with symbol locality and group decodability
for distributed storage}
\author{\IEEEauthorblockN{Wentu Song}
\IEEEauthorblockA{Singapore University of Technology\\ and Design,
Singapore\\
Email: wentu\_song@sutd.edu.sg} \and \IEEEauthorblockN{Son Hoang
Dau} \IEEEauthorblockA{Singapore University of Technology\\ and
Design,Singapore\\
Email: sonhoang\_dau@sutd.edu.sg} \and \IEEEauthorblockN{Chau
Yuen} \IEEEauthorblockA{Singapore University of Technology\\ and
Design, Singapore\\
Email: yuenchau@sutd.edu.sg}} \maketitle

\begin{abstract}
We introduce a new family of erasure codes, called \emph{group
decodable code (GDC)}, for distributed storage system. Given a set
of design parameters $\{\alpha,\beta,k,t\}$, where $k$ is the
number of information symbols, each codeword of an
$(\alpha,\beta,k,t)$-group decodable code is a $t$-tuple of
strings, called \emph{buckets}, such that each bucket is a string
of $\beta$ symbols that is a codeword of a $[\beta,\alpha]$ MDS
code (which is encoded from $\alpha$ information symbols). Such
codes have the following two properties:
\begin{itemize}
 \item [(P1)] \emph{Locally Repairable}: Each code symbol
 has locality $(\alpha,\beta-\alpha+1)$.
 \item [(P2)] \emph{Group decodable}: From each bucket we can
 decode $\alpha$ information symbols.
\end{itemize}
We establish an upper bound of the minimum distance of $(\alpha,
\beta, k, t)$-group decodable code for any given set of
$\{\alpha,\beta,k,t\}$; We also prove that the bound is achievable
when the coding field $\mathbb F$ has size $|\mathbb
F|>{n-1\choose k-1}$.
\end{abstract}


\IEEEpeerreviewmaketitle

\section{Introduction}
Distributed storage systems (DSS) are becoming increasingly
important due to the explosively grown demand for large-scale data
storage, including large files and video sharing, social networks,
and back-up systems. Distributed storage systems store a
tremendous amount of data using a massive collection of
distributed storage nodes and, to ensure reliability against node
failures, introduce a certain of redundancy.

The simplest form of redundancy is replication. DSS with
replication are very easy to implement, but extremely inefficient
in storage efficiency, incurring tremendous waste in devices and
equipment. In recent years, some efficient schemes for distributed
storage systems, such as erasure codes \cite{Weather02} and
regenerating codes \cite{Dimakis10}, are proposed. We focus on
erasure codes in this paper.

MDS codes are the most efficient erasure codes in term of storage
efficiency. When use an $[n,k]$ MDS code, the data file that need
to be stored is divided into $k$ information packets, where each
packet is a symbol of the coding field. These $k$ information
packets are encoded into $n$ packets and stored in $n$ storage
nodes such that each node stores one packet. Then the original
file can be recovered from any $k$ out of the $n$ coded packets.
Although MDS code is storage optimal, it is not efficient for node
repair. That is, when one storage node fails, we must download the
whole file from some other $k$ nodes to reconstruct the coded
packet stored in it.

To construct erasure codes with more repair efficiency than MDS
codes, the concepts of locality and locally repairable code (LRC)
were introduced \cite{Gopalan12,Prakash12,Papail122}. Let
$1\leq\alpha\leq k$ and $\delta\geq 2$. The $i$th code symbol
$c_i~(1\leq i\leq n)$ in an $[n,k]$ linear code $\mathcal C$ is
said to have locality $(\alpha,\delta)$ if there exists a subset
$S_i\subseteq[n]=\{1,2,\cdots,n\}$ containing $i$ and of size
$|S_i|\leq\alpha+\delta-1$ such that the punctured subcode of
$\mathcal C$ to $S_i$ has minimum distance at least $\delta$. We
will call each subset $\{c_j;j\in S_i\}$ a \emph{repair group}.
Thus, if $c_i$ has locality $(\alpha,\delta)$, then $c_i$ can be
computed from any $|S_i|-\delta+1$ other symbols in the repair
group $\{c_j;j\in S_i\}$. A code is said to have all-symbol
locality $(\alpha,\delta)$ (or is called an $(\alpha,\delta)_a$
code) if all of its code symbols have locality $(\alpha,\delta)$.
Note that $|S_i|-\delta+1\leq\alpha$. The code has a higher repair
efficiency than MDS code if $\alpha<k$. The minimum distance of an
$(\alpha,\delta)_a$ linear code is bounded by (See
\cite{Prakash12}) :
\begin{align} d\leq
n-k+1-\left(\left \lceil\frac{k}{\alpha}\right \rceil-1\right )
(\delta-1).\label{eq-lrc-bound}
\end{align}
However, for the case that $(\alpha+\delta-1)\nmid n$ and
$\alpha|k$, there exists no $(\alpha,\delta)_a$ linear code
achieving the above bound \cite{Wentu14}.

The most common case of $(\alpha,\delta)_a$ linear code is that
$n$ is divisible by $\alpha+\delta-1$. For this case, in the
constructions presented in the literature, all code symbols of an
$(\alpha,\delta)_a$ linear code are usually divided into
$t=\frac{n}{\alpha+\delta-1}$ mutually disjoint repair groups such
that each repair group is a codeword of an
$[\alpha+\delta-1,\alpha]$ MDS code. Fig. \ref{fg-ex-code1}
illustrates a $(4,3)_a$ systematic linear code with $n=18$ and
$k=6$, where $x_1,\cdots,x_6$ are the information symbols and
$y_1,\cdots,y_{12}$ are the parities. All code symbols are divided
into three groups and each group is a codeword of a $[6,4]$ MDS
code. By constructing the parities elaborately, the code can be
distance optimal according to \eqref{eq-lrc-bound}.

\renewcommand\figurename{Fig}
\begin{figure}[htbp]
\begin{center}
\includegraphics[height=2.45cm]{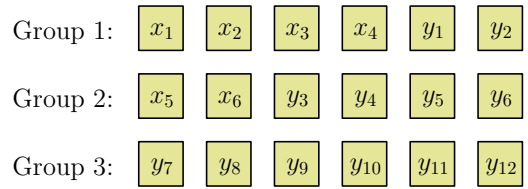}
\end{center}
\vspace{-0.2cm}\caption{Illustration of a systematic locally
repairable code: The information symbols $x_1,\cdots,x_6$ are
encoded into $x_1,\cdots,x_6,y_1,\cdots,y_{12}$ that are divided
into three groups. Each group is a codeword of a $[6,4]$ MDS code.
}\label{fg-ex-code1}
\end{figure}

As pointed out in \cite{Tamo-IT}, in distributed storage
applications there are subsets of the data that are accessed more
often than the remaining contents (they are termed ``hot data'').
Thus, a desired property of a distributed storage system is that
the subsets of hot data can be retrieved easily and by multiple
ways. For example, for the storage system illustrated by Fig.
\ref{fg-ex-code1}, suppose $x_1$ is hot data. There are two ``easy
ways'' to retrieve it from the system: Downloaded $x_1$ directly
from the node where it is stored, or decode it from any four coded
symbols in the first group. Another way is to decode it from some
six coded symbols, but this is not an easy way because to decode
$x_1$, one has to decode the whole data file.

\renewcommand\figurename{Fig}
\begin{figure}[htbp]
\begin{center}
\includegraphics[height=2.45cm]{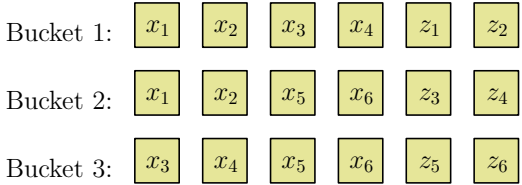}
\end{center}
\vspace{-0.2cm}\caption{Illustration of a $(4,6,6,3)$-group
decodable code: $x_1,\cdots,x_6$ are information symbols and
$z_1,\cdots,z_6$ are parities. Each codeword has $3$ buckets and
each bucket is a codeword of a $[6,4]$ MDS code that is encoded
from $4$ information symbols. Clearly, each bucket is a repair
group.}\label{fg-ex-code2}
\end{figure}

In this work, we introduce a new family of erasure codes, called
\emph{group decodable code (GDC)}, for distributed storage system,
which can provide more options of easy ways to retrieve each
information symbol than systematic codes. Given a set of design
parameters $\{\alpha,\beta,k,t\}$, where $k$ is the number of
information symbols, each codeword of an
$(\alpha,\beta,k,t)$-group decodable code is a $t$-tuple of
strings, called \emph{buckets}, such that each bucket is a string
of $\beta$ symbols that is a codeword of a $[\beta,\alpha]$ MDS
code (which is encoded from $\alpha$ information symbols). So such
codes have the following two properties:
\begin{itemize}
 \item [(P1)] \emph{Locally Repairable}: Each code symbol
 has locality $(\alpha,\beta-\alpha+1)$.
 \item [(P2)] \emph{Group decodable}: From each bucket we can
 decode $\alpha$ information symbols.
\end{itemize}

Fig. \ref{fg-ex-code2} illustrates a $(4,6,6,3)$-group decodable
code. There are six information symbols $x_1,\cdots,x_6$. Each
codeword has $3$ buckets and each bucket is a codeword of a
$[6,4]$ MDS code that is encoded from $4$ information symbols.
Clearly, each bucket is a repair group. So each code symbol of
this code has locality $(4,3)$. Moreover, this code provides more
options of easy ways to retrieve each information symbol than the
code in Fig. \ref{fg-ex-code1}. For example, $x_1$ can be
downloaded directly from two nodes or can be decoded from any four
symbols in bucket 1 or any four symbols in bucket 2. In the case
that $x_1$ is requested simultaneously by many users of the
system, the can ensure that multiple read requests can be
satisfied concurrently and with no delays.

\subsection{Our contribution}
We establish an upper bound of the minimum distance of group
decodable code for any given set of parameters
$\{\alpha,\beta,k,t\}$ (Theorem \ref{up-bnd}). We also prove that
there exist linear codes of which the minimum distances achieve
the bound, which proves the tightness of the bound (Theorem
\ref{up-bnd-tight}). Our proof gives a method to construct
$(\alpha,\beta,k,t)$-group decodable code on a field of size
$q>{n-1\choose k-1}$, where $n=t\beta$ is the length of the code.

\subsection{Related Work}
Some existing works consider erasure codes for distributed storage
that can provide multiple alternatives for repairing information
symbols or all code symbols with locality.

In \cite{Pamies13}, the authors introduced the metric ``local
repair tolerance'' to measure the maximum number of erasures that
do not compromise local repair. They also presented a class of
locally repairable codes, named pg-BLRC codes, with high local
repair tolerance and low repair locality. However, they did not
present any bound on the minimum distance of such codes.

In \cite{Wang14}, the concept of $(\alpha,\delta)_c$-locality was
defined, which captures the property that there exist $\delta-1$
pairwise disjoint local repair sets for a code symbol. An upper
bound on the minimum distance for $[n,k]$ linear codes with
information $(\alpha,\delta)_c$-locality was derived, and codes
that attain this bound was constructed for the length $n\geq
k(\alpha(\delta-1)+1)$. However, for $n< k(\alpha(\delta-1)+1)$,
it is not known whether there exist codes attaining this bound.
Upper bounds on the rate and minimum distance of codes with
all-symbol $(\alpha,\delta)_c$-locality was proved in
\cite{Tamo14}. However, no explicit construction of codes that
achieve this bound was presented. It is still an open question
whether the distance bound in \cite{Tamo14} is achievable.

Another subclass of LRC is codes with $(r,t)$-locality: any set of
$t$ code symbols are functions of at most $r$ other code symbols
\cite{Rawat-14}. Hence, for such codes, any $t$ failed code
symbols can be repaired by contacting at most $r$ other code
symbols. An upper bound of the minimum distance of such codes
similar to \eqref{eq-lrc-bound} is derived in \cite{Rawat-14}.

\subsection{Organization}
The paper is organized as follows. In Section
\uppercase\expandafter{\romannumeral 2}, we present the related
concepts and the main results of this paper. We prove the main
results in section \uppercase\expandafter{\romannumeral 3} and
Section \uppercase\expandafter{\romannumeral 4}.

\section{Model and Main Result}
Denote $[n]:=\{1,\cdots,n\}$ for any given positive integer $n$.
Let $\mathbb F$ be a finite field and $k$ be a positive integer.
For any $S=\{i_1,\cdots,i_{\alpha}\}\subseteq[k]$, the
\emph{projection} of $\mathbb F^k$ about $S$ is a function
$\psi_{S}:\mathbb F^{k}\rightarrow\mathbb F^{\alpha}$ such that
for any $(x_1,\cdots,x_k)\in\mathbb F^{k}$,
\begin{align}\label{proj}
\psi_{S}(x_1,\cdots,x_k)=(x_{i_1},\cdots,x_{i_{\alpha}}).
\end{align}

We can define group decodable codes (GDC) as follows.
\begin{defn}\label{gdc} Suppose $\mathcal
S=\{S_1,\cdots, S_t\}$ is a collection of subsets of $[k]$ and
$\mathcal N=\{n_1,\cdots, n_t\}$ is a collection of positive
integers such that $\textstyle\bigcup_{i=1}^tS_i=[k]$ and $n_i>
k_i=|S_i|, \forall i\in[t]$. A linear code $\mathcal C$ is said to
be an $(\mathcal N,\mathcal S)$-\emph{group decodable code (GDC)}
if $\mathcal C$ has an encoding function $f$ of the following
form:
\begin{eqnarray}\label{gdc-ec-fun}
f:\mathbb F^k&\rightarrow& ~ ~ ~ ~ ~ ~ ~ ~ \mathbb
F^{n_1}\times\cdots\times\mathbb F^{n_t}\nonumber\\
x ~~&\mapsto&(f_1(\psi_{S_1}(x)),\cdots, f_t(\psi_{S_t}(x))).
\end{eqnarray}
where each $f_i:\mathbb F^{k_i}\rightarrow\mathbb F^{n_i}$ is an
encoding function of an $[n_i,k_i]$ MDS code and the output of it
is called a \emph{bucket}.
\end{defn}

By Definition \ref{gdc}, if $\mathcal C$ is an $(\mathcal
N,\mathcal S)$-group decodable code, then $\mathcal C$ has length
$n=\sum_{i=1}^tn_i$. For any message vector $x=(x_1,\cdots,x_k)$
and $i\in[t]$, the subset of $k_i$ messages $\{x_j; j\in S_i\}$
are encoded into a bucket of $n_i$ symbols by the function $f_i$.
A codeword of $\mathcal C$ is the concatenation of these $t$
buckets. Since $f_i$ is an encoding function of an $[n_i,k_i]$ MDS
code, each bucket is a repair group and we can decode the subset
$\{x_j; j\in S_i\}$ from any $k_i$ symbols of the $i$th
bucket.---The term ``group decodable code'' comes from this
observation.

For the special case that $S_1,\cdots,S_t$ are pairwise disjoint,
an $(\mathcal N,\mathcal S)$-group decodable code $\mathcal C$ is
just the direct sum of the $t$ buckets and the minimum distance of
$\mathcal C$ is $\min\{n_i-k_i+1;i\in[t]\}$. In this work, we
consider the most general case that $S_1,\cdots,S_t$ can have
arbitrary intersection.

Definition \ref{gdc} depends on the explicit collections $\mathcal
S$ and $\mathcal N$. We can also define GDC based on design
parameters.

\begin{defn}\label{gdc-pmt} Let $\alpha, \beta,
k, t$ be positive integers such that $\alpha<\text{min}\{k,
\beta\}$. A linear code $\mathcal C$ is said to be an $(\alpha,
\beta, k, t)$-\emph{group decodable code} if $\mathcal C$ is an
$(\mathcal N,\mathcal S)$-group decodable code for some $\mathcal
S=\{S_1,\cdots, S_t\}$ and $\mathcal N=\{n_1,\cdots, n_t\}$ such
that $S_i\subseteq[k]$, $|S_i|=\alpha$ and $n_i=\beta$ for all
$i\in[t]$.
\end{defn}

If $\mathcal C$ is an $(\alpha, \beta, k, t)$-group decodable
code, then by Definition \ref{gdc-pmt}, the length of $\mathcal C$
is $n=t\beta$. Moreover, since $\textstyle\bigcup_{i=1}^tS_i=[k]$
and $|S_i|=\alpha$, then $t\alpha=\sum_{i=1}^t|S_i|\geq k$, which
implies that $\left\lfloor\frac{t\alpha}{k}\right\rfloor\geq 1$.
So we have the following remark.
\begin{rem}\label{rem-gdc-pmt}
If $\mathcal C$ is an $(\alpha, \beta, k, t)$-group decodable
code, then $n=t\beta$ and
$\left\lfloor\frac{t\alpha}{k}\right\rfloor\geq 1$.
\end{rem}

We will give a tight upper bound on the minimum distance $d$ of an
$(\alpha, \beta, k, t)$-group decodable code $\mathcal C$. Our
main results are the following two theorems.

\begin{thm}\label{up-bnd} Let $t\alpha=sk+r$ such that $s\geq 1$
and $0\leq r\leq k-1$. If $\mathcal C$ is an $(\alpha, \beta, k,
t)$-group decodable code, then
\begin{align} d\leq s\beta-\left\lceil\frac{k-r}{{t\choose
s}}\right\rceil+1. \label{eq-up-bnd}\end{align}
\end{thm}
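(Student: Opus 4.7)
The strategy is the standard Singleton-type argument: exhibit a set $T\subseteq[n]$ such that the projection of $\mathcal{C}$ to the coordinates in $T$ has rank at most $k-1$, which forces a nonzero codeword supported in $[n]\setminus T$ and hence $d\le n-|T|$. I will build $T$ by taking $t-s$ whole buckets indexed by a well-chosen set $B\subseteq[t]$, together with as many extra coordinates as possible from the remaining $s$ buckets.

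Fix $B\subseteq[t]$ with $|B|=t-s$, set $U_B=\bigcup_{i\in B}S_i$, and write $V_B=[k]\setminus U_B$. Since each bucket is a $[\beta,\alpha]$ MDS code on the symbols indexed by $S_i$, reading every coordinate of every bucket in $B$ recovers exactly $\{x_\ell:\ell\in U_B\}$, contributing rank $|U_B|$ to the projection. For each $i\notin B$, any $a_i\in\{0,\dots,\beta\}$ extra coordinates from bucket $i$ produce $a_i$ further linear forms in the $x$'s, so the total rank of the projection onto the resulting $T$ is at most $|U_B|+\sum_{i\notin B}a_i$. Choosing $\sum_{i\notin B}a_i=|V_B|-1$ therefore drops the rank to at most $k-1$, and this is feasible because every $j\in V_B$ lies in some $S_i$ with $i\notin B$, so $V_B\subseteq\bigcup_{i\notin B}S_i$ and hence $|V_B|\le s\alpha\le s\beta$. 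Consequently $d\le n-|T|=s\beta-|V_B|+1$, and the remaining task is to pick $B$ so that $|V_B|$ is large.

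Let $A=[t]\setminus B$ and $n_j=|N_j|$ where $N_j=\{i:j\in S_i\}$, so that $V_B=\{j:N_j\subseteq A\}$. Double counting over subsets $A$ of $[t]$ of size $s$ gives
\begin{align*}
\sum_{|A|=s}|V_B|\;=\;\sum_{j=1}^{k}\binom{t-n_j}{\,s-n_j\,},
\end{align*}
with the convention $\binom{m}{\ell}=0$ for $\ell<0$. The combinatorial input now needed is that this sum is at least $k-r$, using only $n_j\ge 1$ and $\sum_j n_j=t\alpha=sk+r$. This follows from a convexity/smoothing step: replacing any pair $(n_j,n_{j'})$ with $n_j\le n_{j'}-2$ by $(n_j+1,n_{j'}-1)$ never increases the sum, since the two relevant binomial increments $\binom{t-n_{j'}}{s-n_{j'}+1}$ and $\binom{t-n_j-1}{s-n_j}$ are both of the form $\binom{\cdot}{t-s-1}$ whose top arguments satisfy $t-n_j-1\ge t-n_{j'}+1$. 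Iterating drives the configuration to the balanced one with $r$ indices equal to $s+1$ and $k-r$ indices equal to $s$, on which the sum is exactly $(k-r)\cdot 1+r\cdot 0=k-r$. Pigeonholing over the $\binom{t}{s}$ choices of $A$ then yields a $B$ with $|V_B|\ge\lceil(k-r)/\binom{t}{s}\rceil$, and substituting back gives the bound. The only delicate step is the smoothing argument for $\sum_j\binom{t-n_j}{s-n_j}$; everything else reduces to the Singleton-type counting above and the MDS property of each bucket.
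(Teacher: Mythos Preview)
Your argument is correct, and it is genuinely different from the paper's route. The paper first proves an intermediate bound $d\le w_{\min}(M_0)\beta-\Gamma(M_0)+1$ valid for any fixed family $\mathcal S$ (Lemma~\ref{min-dst-rglr}), via a rather elaborate analysis of the support/indicator matrix (Lemma~\ref{spbm-dst} and the four Claims in Appendix~C), and only then optimises this expression over all admissible $M_0$ by a short pigeonhole on the rows of weight $s$. You bypass Lemma~\ref{min-dst-rglr} entirely: for any $A\subseteq[t]$ with $|A|=s$ you take $T=\bigcup_{i\notin A}J_i$ together with $|V_B|-1$ extra columns, observe that the support structure alone forces $\mathrm{rank}(G_T)\le |U_B|+(|V_B|-1)=k-1$, and conclude $d\le s\beta-|V_B|+1$; then the averaging identity $\sum_{|A|=s}|V_B|=\sum_j\binom{t-n_j}{s-n_j}$ plus convexity of $m\mapsto\binom{t-m}{t-s}$ gives some $A$ with $|V_B|\ge\lceil(k-r)/\binom{t}{s}\rceil$. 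Two remarks: (i) for the rank step you only need $\mathrm{rank}(G_{\cup_{i\in B}J_i})\le|U_B|$, which follows purely from the column supports lying in $U_B$---the MDS property is not actually used here; (ii) your smoothing works because $t>s$ (since $\alpha<k$ forces $t>sk/k=s$), so $t-s-1\ge 0$ and $\binom{\cdot}{t-s-1}$ is monotone. What the paper's longer route buys is the $\mathcal S$-dependent bound $w_{\min}(M_0)\beta-\Gamma(M_0)+1$, which it also shows is \emph{achievable} for each fixed $\mathcal S$; this is what drives the tightness proof (Theorem~\ref{up-bnd-tight}). Your approach is shorter and more transparent for Theorem~\ref{up-bnd} alone, but it does not directly yield that finer per-$\mathcal S$ statement.
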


Note that an $(\alpha, \beta, k, t)$-group decodable code is an
$(r,\delta)_a$ with the additional property (P2). So the bound
\eqref{eq-up-bnd} is looser than the bound \eqref{eq-lrc-bound}.
The sacrifice in minimum distance is resulted from the property
(P2).

\begin{thm}\label{up-bnd-tight} If $|\mathbb F|>{n-1\choose k-1}$,
then there exists an $(\alpha, \beta, k, t)$-group decodable code
over $\mathbb F$ with $d$ achieves the bound (\ref{eq-up-bnd}).
\end{thm}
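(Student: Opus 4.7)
The plan is constructive, in two pieces: first fix a combinatorially balanced collection $\mathcal S=\{S_1,\dots,S_t\}$ of $\alpha$-subsets of $[k]$, then specify the MDS encoders for each bucket through a generic parameterization. Since $t\alpha=sk+r$, I would choose $\mathcal S$ so that exactly $r$ coordinates of $[k]$ appear in $s+1$ buckets and the remaining $k-r$ coordinates appear in exactly $s$ buckets; such a balanced family exists by a standard double-counting argument. This balance is essential, because the bound \eqref{eq-up-bnd} is driven by how many information coordinates any $s$ buckets jointly cover, and only a (near-)balanced $\mathcal S$ can make every $s$-subset of buckets cover close to $k-\lceil(k-r)/\binom{t}{s}\rceil$ of the coordinates.

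With $\mathcal S$ fixed, the $k\times n$ generator matrix $G$ of $\mathcal C$ has a block structure: the $\beta$ columns of bucket $i$ are supported only on the $\alpha$ rows indexed by $S_i$, and the restriction of those columns to $S_i$ is the $\alpha\times\beta$ generator matrix $G_i$ of an $[\beta,\alpha]$ MDS code. I would take each $G_i$ to be a generalized Reed--Solomon (or Cauchy) generator parameterized by distinct scalars $a_{i,1},\dots,a_{i,\beta}\in\mathbb F$ and $\alpha$ column multipliers. This instantly gives property (P2) and locality $(\alpha,\beta-\alpha+1)$, so the only remaining task is to choose the parameters so that the global minimum distance equals the upper bound $d^{*}:=s\beta-\lceil(k-r)/\binom{t}{s}\rceil+1$.

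The distance condition on $\mathcal C$ is equivalent to the rank condition that every subset of $n-d^{*}+1$ columns of $G$ has rank $k$, which in turn is equivalent to the non-vanishing of a finite family of $k\times k$ minors of $G$. Each such minor is a polynomial in the parameters $\{a_{i,j}\}$. The core symbolic step is to prove that every such minor is a nonzero polynomial; I would do this by exhibiting, for each relevant column set, a system-of-distinct-representatives style matching between selected columns and rows that, together with the MDS property of each $G_i$ and the balanced covering of $[k]$ by $\mathcal S$, yields an evaluation giving a nonzero determinant. Once each minor is known to be nonzero as a polynomial, a Schwartz--Zippel / DeMillo--Lipton union bound over these minors produces an assignment of the parameters in $\mathbb F$ that simultaneously avoids all zero loci, provided $|\mathbb F|$ exceeds the relevant count, which a careful bookkeeping will match to $\binom{n-1}{k-1}$.

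The main obstacle is precisely the symbolic non-vanishing: one must (i) identify the correct class of $k$-column subsets whose minors control the distance (those corresponding to correctable $d^{*}-1$-erasure patterns consistent with the covering structure of $\mathcal S$), (ii) show each of their $k\times k$ minors is not identically zero in the $\{a_{i,j}\}$, and (iii) bound the total degree/count so that the field-size hypothesis $|\mathbb F|>\binom{n-1}{k-1}$ suffices. Step (ii) is where the balanced choice of $\mathcal S$ really pays off: after deleting $d^{*}-1$ columns, the balance guarantees that within each surviving bucket enough columns remain to contribute independent rows, and the MDS property inside buckets lets one extract a nonzero diagonal. The rest of the proof is then a clean application of the polynomial method.
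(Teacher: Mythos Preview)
Your outline is in the right spirit---the paper also reduces the problem to (i) a combinatorial choice of $\mathcal S$ and (ii) a polynomial-method argument over the support pattern---but your step (i) has a genuine gap.

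Row-weight balance is not sufficient. Requiring that $k-r$ coordinates lie in exactly $s$ buckets and $r$ coordinates in $s{+}1$ buckets fixes $w_{\min}(M_0)=s$ for the incidence matrix $M_0$, but it does \emph{not} control the quantity that actually governs the bound, namely the maximum multiplicity $\Gamma(M_0)$ with which a single weight-$s$ row-support is repeated among the $k-r$ minimal rows. By Lemma~\ref{min-dst-rglr} the best distance obtainable from a given $\mathcal S$ is $w_{\min}(M_0)\beta-\Gamma(M_0)+1$, so to hit \eqref{eq-up-bnd} you need $\Gamma(M_0)=\big\lceil(k-r)/\binom{t}{s}\big\rceil$, i.e.\ the $k-r$ weight-$s$ rows must be spread as evenly as possible over the $\binom{t}{s}$ available supports. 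A ``standard double-counting argument'' produces the right row/column degree sequence but says nothing about repetitions. Concretely, take $t=4$, $\alpha=3$, $k=6$, so $s=2$, $r=0$, $\binom{t}{s}=6$ and the target is $\Gamma(M_0)=1$. The $6\times 4$ matrix with rows (as supports) $\{1,2\},\{1,2\},\{1,3\},\{2,4\},\{3,4\},\{3,4\}$ has all row weights $2$ and all column weights $3$, yet $\Gamma(M_0)=2$; the resulting code can reach only $d=2\beta-1$, strictly below the bound $2\beta$. The paper's proof avoids this by explicitly building $M_0$ so that each weight-$s$ vector occurs either $u$ or $u{+}1$ times (Cases 2.1/2.2), using Lemma~\ref{mat-contr} for the residual block; this is precisely the missing ingredient in your plan.

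A secondary point: parameterizing each bucket by a Reed--Solomon/Cauchy generator makes the MDS property free but raises the degree of each $k\times k$ minor in the evaluation points (up to $\alpha-1$ per variable), so the union bound you sketch will not obviously land at $|\mathbb F|>\binom{n-1}{k-1}$. The paper instead treats every nonzero entry of $G$ as an independent indeterminate (degree $1$ in each minor, appearing in at most $\binom{n-1}{k-1}$ minors), which is what yields that exact field-size threshold; the bucket-wise MDS property is then verified separately via the perfect-matching argument of Lemma~\ref{gdc-lem}.
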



By Remark \ref{rem-gdc-pmt}, $t\alpha\geq k$. So we always have
$t\alpha=sk+r$ for some $s\geq 1$ and $0\leq r\leq k-1$. So
Theorem \ref{up-bnd} and \ref{up-bnd-tight} covers all possible
sets of parameters $\{\alpha, \beta, k,t\}$.

\section{Proof of Theorem \ref{up-bnd}}
In this section, we prove Theorem \ref{up-bnd}. We will use some
similar discussions as in \cite{Wentu12, Dau13, Dau14}.

In the rest of this paper, we always assume that $\mathcal
S=\{S_1,\cdots, S_t\}$ is a collection of subsets of $[k]$ and
$\mathcal N=\{n_1,\cdots, n_t\}$ such that
$\textstyle\bigcup_{i=1}^tS_i=[k]$ and $n_i=\beta>|S_i|=\alpha$
for all $i\in[t]$. Moreover, let $n=t\beta$ and
\begin{align}J_i=\{(i-1)\beta+1, (i-1)\beta+2, \cdots, i\beta\}.
\label{def-Ji}\end{align} Clearly, $J_1,\cdots, J_t$ are pairwise
disjoint and $\bigcup_{i=1}^t J_i=[n]$.

Let $\ell$ be any positive integers and $A$ be any $k\times\ell$
matrix. If $J\subseteq[\ell]$, we use $A_J$ to denote the
sub-matrix of $A$ formed by the columns of $A$ that are indexed by
$J$. Moreover, we will use the following notations:
\begin{itemize}
 \item [1)] For $i\in[k]$ and $j\in[\ell]$, $R_A(i)$ and $C_A(j)$
 are the support of the $i$th row and the $j$th column of $A$
 respectively. Meanwhile, $|R_A(i)|$ and $|C_A(j)|$ are called the
 weight of the $i$th row and the $j$th column of $A$
 respectively.
 \item [2)] The minimum row weight of $A$ is
 \begin{align} w_{\min}(A)=\min\limits_{i\in[k]}|R_{A}(i)|.
 \label{eq-min-wght}\end{align} The $i$th row is said to be
 minimal if $|R_A(i)|=w_{\min}(A)$.
 \item [3)] The repetition number of the $i$th row, denoted by
 $\Gamma_A(i)$, is the number of $i'\in[k]$ such that
 $R_A(i')=R_A(i)$. Let $\Phi_A$ be the set of indices of all minimal rows of
 $A$. We denote
 \begin{align} \Gamma(A)=\max\limits_{i\in \Phi_A}\Gamma_{A}(i).
 \label{eq-gama-mat}\end{align}
\end{itemize}

Clearly, we always have $\Gamma(M)\geq 1$. The following example
gives some explanation of the above notations.
\begin{exam}\label{ex-ntn}
Consider the following $7\times 8$ binary matrix
\begin{eqnarray*}
~ ~ ~ A=\left[\begin{array}{cccccccc}
1 & 0 & 1 & 0 & 0 & 0 & 1 & 0\\
0 & 1 & 0 & 1 & 0 & 1 & 0 & 0\\
0 & 0 & 1 & 0 & 1 & 0 & 1 & 1\\
1 & 0 & 0 & 0 & 1 & 1 & 0 & 1\\
0 & 1 & 0 & 1 & 0 & 1 & 0 & 0\\
1 & 0 & 1 & 0 & 0 & 0 & 1 & 0\\
0 & 1 & 0 & 1 & 1 & 0 & 0 & 1\\
\end{array}\right].
\end{eqnarray*}
We have $R_A(1)=R_A(6)$. So $\Gamma_A(1)=\Gamma_A(6)=2$.
Similarly, $\Gamma_A(2)=\Gamma_A(5)=2$ and the repetition number
of all other rows are $1$. Note that $w_{\text{min}}(A)=3$ and the
minimal rows of $A$ are indexed by $\{1,2,5,6\}$. Then
$\Gamma(A)=2$.
\end{exam}

To prove Theorem \ref{up-bnd}, we first give a description of
$(\mathcal N,\mathcal S)$-group decodable codes using their
generator matrix. To do this, we need the following two
definitions.

\begin{defn}\label{supp-m-m}
Let $M=(m_{i,j})_{k\times n}$ be a binary matrix and
$G=(a_{i,j})_{k\times n}$ be a matrix over $\mathbb F$. We say
that $G$ is supported by $M$ if for all $i\in[k]$ and $j\in[n]$,
$m_{i,j}=0$ implies $a_{i,j}=0$. If $\mathcal C$ is a linear code
over $\mathbb F$ and has a generator matrix $G$ supported by $M$,
we call $M$ a \emph{support generator matrix} of $\mathcal C$.
\end{defn}

\begin{defn}\label{ext-inc-mx}
Let $M_0$ be a $k\times t$ binary matrix and $M$ be a $k\times n$
binary matrix such that $C_{M_0}(j)=S_j$ for all $j\in[t]$ and
$C_{M}(j)=S_i$ for all $i\in[t]$ and $j\in J_i$. We call $M_0$ the
\emph{incidence matrix} of $\mathcal S$ and $M$ the
\emph{indicator matrix} of $(\mathcal N,\mathcal S)$.
\end{defn}

\begin{rem}\label{rem-inc-mat} Since
$\textstyle\bigcup_{i=1}^tS_i=[k]$ and $C_{M_0}(i)=S_i$ for all
$i\in[t]$, then  by Definition \ref{ext-inc-mx}, each row of $M_0$
has at least one $1$ and each column of $M_0$ has exactly $\alpha
~ 1$s. Moreover, by \eqref{def-Ji} and Definition
\ref{ext-inc-mx}, $M$ is extended from $M_0$ by replicating each
column of $M_0$ by $\beta$ times. Hence, each row of $M$ has at
least $\beta$ $1$s and each column of $M_0$ has exactly $\alpha ~
1$s.
\end{rem}

Now, we can describe $(\mathcal N,\mathcal S)$-group decodable
codes using their generator matrix.
\begin{lem}\label{spbm-gdc} Let $M$ be the indicator matrix of
$(\mathcal N, \mathcal S)$. Then $\mathcal C$ is an $(\mathcal
N,\mathcal S)$-group decodable code if and only if $\mathcal C$
has a generator matrix $G$ satisfying the following two
conditions:
\begin{itemize}
 \item[(1)] $G$ is supported by $M$;
 \item[(2)] $\text{rank}(G_{J})=\alpha$ for each $i\in[t]$ and
 $J\subseteq J_i$ with $|J|=\alpha$.
\end{itemize}
\end{lem}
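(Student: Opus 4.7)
The plan is to prove the two directions of the ``if and only if'' by directly translating between the encoding-function description in Definition \ref{gdc} and the generator-matrix description via the indicator matrix $M$. The bridge in both directions is the observation that, thanks to the definition of $M$ (with $C_M(j)=S_i$ for all $j\in J_i$), condition~(1) forces the $j$-th column of $G$ for $j\in J_i$ to be supported on the rows indexed by $S_i$. Consequently the block $G_{J_i}$ is, up to zero rows, an $\alpha\times\beta$ matrix $G^{(i)}$ (with row-labels $S_i$), and the encoding of bucket $i$ from any message $x$ is $x G_{J_i}=\psi_{S_i}(x)\,G^{(i)}$.

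For the forward direction, I would assume $\mathcal{C}$ is an $(\mathcal{N},\mathcal{S})$-GDC with encoding as in \eqref{gdc-ec-fun}. Each $f_i$ is a $[\beta,\alpha]$ MDS encoder and is therefore realized by multiplication by some $\alpha\times\beta$ matrix $G^{(i)}$ whose every $\alpha$ columns are linearly independent. I would assemble $G$ block-column by block-column: for each $i\in[t]$, place the rows of $G^{(i)}$ into $G$ at the row positions indexed by $S_i$ and column positions indexed by $J_i$, and set all other entries in columns $J_i$ to zero. Then $xG = (f_1(\psi_{S_1}(x)),\dots,f_t(\psi_{S_t}(x)))$, so $G$ generates $\mathcal{C}$. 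Property (1) is immediate from the placement of zeros, and property (2) follows from the MDS property of $G^{(i)}$ because, for $J\subseteq J_i$ with $|J|=\alpha$, the nonzero rows of $G_J$ are exactly the corresponding $\alpha$ columns of $G^{(i)}$.

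For the converse, I would take $G$ satisfying (1) and (2) and let $\mathcal{C}$ be the code it generates. Define $G^{(i)}$ to be the $\alpha\times\beta$ submatrix of $G$ on rows $S_i$ and columns $J_i$. By (1), every row of $G_{J_i}$ outside $S_i$ is zero, so $x G_{J_i}=\psi_{S_i}(x)\,G^{(i)}$ for every $x\in\mathbb{F}^k$. By (2), any $\alpha$ columns of $G^{(i)}$ are linearly independent, so setting $f_i(y):=y\,G^{(i)}$ yields a $[\beta,\alpha]$ MDS encoder. Splitting the codeword $xG$ into the blocks indexed by $J_1,\dots,J_t$ then exhibits exactly the form~\eqref{gdc-ec-fun}, so $\mathcal{C}$ is an $(\mathcal{N},\mathcal{S})$-GDC.

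There is no real obstacle here; the only point needing a moment of care is the equivalence between condition (2), stated as a rank condition on the $k\times\alpha$ submatrix $G_J$ of the full generator matrix, and the MDS condition on the $\alpha\times\alpha$ minors of $G^{(i)}$. This is an immediate consequence of the support condition (1), which guarantees that $G_J$ and $G^{(i)}_J$ differ only by appended all-zero rows and therefore have equal ranks.
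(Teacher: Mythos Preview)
Your proposal is correct and is precisely the unpacking of Definition~\ref{gdc} and Definition~\ref{ext-inc-mx} that the paper has in mind; the paper's own proof is the single sentence ``This lemma can be directly derived from Definition~\ref{gdc} and~\ref{ext-inc-mx},'' and your argument supplies exactly those details. The one subtlety you handle correctly is that condition~(1) forces the rows of $G_{J_i}$ outside $S_i$ to vanish, so the rank condition~(2) on $G_J$ is equivalent to the MDS condition on the $\alpha\times\beta$ block $G^{(i)}$.
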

\begin{proof}
This lemma can be directly derived from Definition \ref{gdc} and
\ref{ext-inc-mx}.
\end{proof}

For any $[n,k]$ linear code $\mathcal C$, the well-known Singleton
bound ([15, Ch1]) states that $d\leq n-k+1$. On the other hand, we
always have $d\geq 1$. So it must be that $d=n-k+1-\delta$ for
some $\delta\in\{0,1,\cdots, n-k\}$. The following lemma describes
a useful fact about $d$ for any linear code \cite{MacWilliams}.
\begin{lem}\label{lc-gdc}
Let $\mathcal C$ be an $[n,k,d]$ linear code and $G$ be a
generator matrix of $\mathcal C$. Let $0\leq\delta\leq n-k$. Then
$d\geq n-k+1-\delta$ if and only if any $k+\delta$ columns of $G$
has rank $k$.
\end{lem}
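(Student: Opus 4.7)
The plan is to translate the minimum-distance condition into a condition on column-subsets of $G$ via the standard fact that every codeword of $\mathcal C$ has the form $uG$ for some $u\in\mathbb F^k$. I would first rephrase $d\ge n-k+1-\delta$ contrapositively as the non-existence of a nonzero codeword of weight at most $n-k-\delta$, and then rewrite this in the language of column supports of $G$.

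Concretely, for $u\in\mathbb F^k$ the codeword $c=uG$ has a zero in position $j\in[n]$ precisely when $u^{\!\top}G_j=0$, where $G_j$ denotes the $j$-th column of $G$. Hence the weight of $c$ equals $n$ minus the number of columns of $G$ lying in the hyperplane $u^{\perp}$. Therefore a nonzero codeword of weight $\le n-k-\delta$ exists if and only if there exist a nonzero $u\in\mathbb F^k$ and a set $T\subseteq[n]$ of size $|T|\ge k+\delta$ with $u^{\!\top}G_j=0$ for all $j\in T$, i.e.\ $u$ lies in the left null space of $G_T$.

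From here the lemma follows by a null space versus rank argument: the submatrix $G_T$ has a nontrivial left null space if and only if $\mathrm{rank}(G_T)<k$. So the existence of a bad $(u,T)$ pair is equivalent to the existence of a set $T$ of at least $k+\delta$ columns of rank strictly less than $k$; by monotonicity of rank under adding columns, this is in turn equivalent to the existence of some set of exactly $k+\delta$ columns of rank $<k$. Negating both sides yields the stated equivalence.

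No step here is a real obstacle; the whole argument is essentially unwinding definitions. The only point that deserves a line of care is the reduction from ``some $T$ with $|T|\ge k+\delta$'' to ``every $T$ with $|T|=k+\delta$'': if $|T|>k+\delta$ and $\mathrm{rank}(G_T)<k$, then any subset $T'\subseteq T$ with $|T'|=k+\delta$ also has $\mathrm{rank}(G_{T'})<k$, which gives the required reduction. I would present the proof in the ``iff'' direction by chaining these equivalences rather than doing two separate implications, keeping it to a few lines.
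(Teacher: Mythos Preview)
Your proof is correct and is exactly the standard argument. Note that the paper does not actually prove this lemma; it simply states it as a well-known fact with a reference to MacWilliams--Sloane, so there is no ``paper's own proof'' to compare against. Your write-up would serve perfectly well as a self-contained justification.
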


Using this lemma, we can give a bound on the minimum distance of
any linear code by its support generator matrix.
\begin{lem}\label{spbm-dst}
Let $M=(m_{i,j})$ be a $k\times n$ binary matrix and
$0\leq\delta\leq n-k$. The following three conditions are
equivalent:
\begin{itemize}
 \item[(1)] There is an $[n,k]$ linear code $\mathcal C$ over some
 field $\mathbb F$ such that $M$ is a support generator matrix of
 $\mathcal C$ and $d\geq n-k+1-\delta$.
 \item[(2)] $|\mathop{\textstyle\bigcup}_{j\in J}C_M(j)|\geq\ell$
 for any $\ell\in[k]$ and any $J\subseteq[n]$ of size $|J|=\ell+\delta$.
 \item[(3)] $|\mathop{\textstyle\bigcup}_{i\in I}R_M(i)|
 \geq n-k+|I|-\delta$ for all $\emptyset\neq I\subseteq[k]$.
\end{itemize}
Moreover, if condition (2) or (3) holds, there exists an $[n,k]$
linear code over the field of size $q>{n-1\choose k-1}$ with a
support generator matrix $M$ and minimum distance $d\geq
n-k+1-\delta$.
\end{lem}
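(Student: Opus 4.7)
The plan is to decompose the three-way equivalence into (1)$\Leftrightarrow$(2), then the purely combinatorial (2)$\Leftrightarrow$(3), and finally to absorb the existence direction (2)$\Rightarrow$(1) into the quantitative ``moreover'' clause, which is where the field-size bound actually enters.

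For (1)$\Rightarrow$(2) I would argue by contrapositive. Suppose $|\bigcup_{j\in J}C_M(j)|<\ell$ for some $J$ with $|J|=\ell+\delta$. Because $G$ is supported by $M$, every nonzero entry of $G_J$ lies in fewer than $\ell$ rows, so $\mathrm{rank}(G_J)\le\ell-1$. Any extension of $J$ to a set $\tilde J$ of size $k+\delta$ raises the rank by at most $k-\ell$, giving $\mathrm{rank}(G_{\tilde J})\le k-1$ and contradicting Lemma \ref{lc-gdc}.

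For (2)$\Leftrightarrow$(3) I would take complements in the bipartite incidence picture of $M$. A failure of (2) produces $\ell$ and $J$ with $|J|=\ell+\delta$ and $I'=\bigcup_{j\in J}C_M(j)$ satisfying $|I'|\le\ell-1$. Writing $I=[k]\setminus I'$, the columns $j\in J$ are precisely those missed by every row of $I$, i.e.\ $J\subseteq[n]\setminus\bigcup_{i\in I}R_M(i)$. Taking $\ell=|I'|+1$ (the tightest value) and counting yields $|\bigcup_{i\in I}R_M(i)|\le n-k+|I|-1-\delta$, which is exactly the negation of (3). The substitution reverses, so (2) and (3) fail together, proving the equivalence.

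The substantive step is the ``moreover'' clause: given (2)/(3), explicitly build $G$ supported by $M$ with $\mathrm{rank}(G_J)=k$ for every $J\in\binom{[n]}{k+\delta}$ over a field of size $>\binom{n-1}{k-1}$. I would pick the columns of $G$ one at a time, constraining the $j$-th column to lie in the coordinate subspace $V_j=\mathrm{span}\{e_i:m_{i,j}=1\}$. Condition (2), via Hall's theorem applied to the bipartite graph of $M_J$, guarantees that for each such $J$ some $k$-subset $J'\subseteq J$ admits a system of distinct representatives, so $\det G_{J'}$ is a non-identically-zero polynomial in the free entries. At step $j$ the obstruction therefore reduces to avoiding a finite family of proper subspaces of $V_j$ arising from the previously chosen columns, and a careful count of these hyperplanes across all steps gives at most $\binom{n-1}{k-1}$ forbidden directions; any field with more than $\binom{n-1}{k-1}$ elements therefore admits a valid choice, which simultaneously supplies (1). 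The main obstacle I anticipate is pinning the hyperplane count down to exactly $\binom{n-1}{k-1}$ rather than a coarser $\binom{n}{k}$-type bound, and verifying that the Hall condition needed at each step genuinely follows from (2); I would model this argument on the analogous constructions in \cite{Dau13,Dau14} that the authors cite.
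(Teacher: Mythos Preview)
Your (1)$\Rightarrow$(2) and (2)$\Leftrightarrow$(3) match the paper's proof. The difference is in the existence direction: the paper does \emph{not} construct $G$ greedily. Instead it forms the symbolic matrix $X=(x_{i,j})$ supported by $M$, takes the single polynomial $f=\prod_P\det(P)$ over all $k\times k$ submatrices $P$ of $X$ with $\det(P)\not\equiv 0$, observes that each variable $x_{i,j}$ has degree at most $\binom{n-1}{k-1}$ in $f$ (column $j$ is fixed, the other $k-1$ columns are chosen from the remaining $n-1$), and then invokes a Schwartz--Zippel-type lemma to obtain a nonzero evaluation over any field of size $q>\binom{n-1}{k-1}$. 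Hall's theorem enters exactly where you place it---via condition~(3) it certifies that every $(k+\delta)$-set $J$ contains some $J_0$ with $\det(X_{J_0})\not\equiv 0$---but it is invoked once at the end rather than step by step. Your column-by-column scheme is workable in principle, and you have correctly flagged its real obstacle: for the constraint at step $j$ to cut out a \emph{proper} hyperplane of $V_j$ you need certain $(k-1)\times(k-1)$ minors of the already-chosen columns to be nonzero, which forces a stronger running invariant (on minors of all sizes up to $k$) than the one you state. The paper's polynomial method sidesteps that bookkeeping entirely and makes the $\binom{n-1}{k-1}$ bound fall out of a one-line degree count.
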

\begin{proof}
The proof is given in Appendix A.
\end{proof}

For $(\mathcal N,\mathcal S)$-group decodable code, we have the
following two lemmas.
\begin{lem}\label{gdc-lem}
Suppose $M$ is the indicator matrix of $(\mathcal N, \mathcal S)$.
If $M$ satisfies condition (2) of Lemma \ref{spbm-dst}, there
exists an an $(\mathcal N,\mathcal S)$-group decodable code over
the field of size $q>{n-1\choose k-1}$ with minimum distance
$d\geq n-k+1-\delta$.
\end{lem}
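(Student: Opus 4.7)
The plan is to augment the construction behind Lemma \ref{spbm-dst} with additional non-vanishing requirements that enforce the in-bucket MDS property, and to show that the enlarged system is still solvable over a field of size $>\binom{n-1}{k-1}$.

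First I would set up a symbolic generator matrix: let $G=(g_{i,j})$ be the $k\times n$ matrix with $g_{i,j}$ equal to a fresh indeterminate $x_{i,j}$ when $m_{i,j}=1$ and $g_{i,j}=0$ otherwise. Any specialization of the $x_{i,j}$ automatically yields a $G$ supported by $M$, taking care of condition~(1) of Lemma \ref{spbm-gdc}. Combining Lemma \ref{spbm-gdc}(2) with Lemma \ref{lc-gdc}, the resulting code is an $(\mathcal N,\mathcal S)$-group decodable code with $d\geq n-k+1-\delta$ precisely when two families of determinantal polynomials are simultaneously non-zero at the chosen specialization: the ``distance'' family $\mathcal P_{\mathrm d}$ (for each $(k+\delta)$-subset $J\subseteq[n]$, at least one $k\times k$ minor of $G_J$) already handled in Lemma \ref{spbm-dst}, and the ``bucket'' family $\mathcal P_{\mathrm b}$ consisting of $\det(G_J^{S_i})$ for $i\in[t]$ and $\alpha$-subsets $J\subseteq J_i$, where $G_J^{S_i}$ denotes the $\alpha\times\alpha$ submatrix on rows $S_i$ and columns $J$.

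Next I would verify that every polynomial in $\mathcal P_{\mathrm d}\cup\mathcal P_{\mathrm b}$ is non-zero as a polynomial in the $x_{i,j}$. For $\mathcal P_{\mathrm d}$ this is exactly what the proof of Lemma \ref{spbm-dst} extracts from condition~(2) of that lemma. For $\mathcal P_{\mathrm b}$ it is immediate: within bucket $J_i$ all $\beta$ columns have support $S_i$, so $G_J^{S_i}$ consists of $\alpha^2$ pairwise distinct indeterminates, and its determinant contains a monomial of the form $\prod_{p\in S_i}x_{p,\sigma(p)}$ with coefficient $\pm 1$ for some bijection $\sigma\colon S_i\to J$. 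A Schwartz--Zippel style union-bound then yields a common non-vanishing evaluation over any $\mathbb F$ with $|\mathbb F|>\binom{n-1}{k-1}$.

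The main obstacle is showing that adjoining $\mathcal P_{\mathrm b}$ to $\mathcal P_{\mathrm d}$ does not inflate the field-size requirement. I expect the cleanest route is to absorb each bucket minor into a distance minor: for any $\alpha$-subset $J\subseteq J_i$, since $\bigcup_{j=1}^t S_j=[k]$, one can adjoin $k-\alpha$ columns from buckets other than $J_i$ whose supports jointly cover $[k]\setminus S_i$, completing $J$ to a $k$-subset $J'$ for which $G_{J'}$ has block-triangular form and $\det(G_{J'})=\det(G_J^{S_i})\cdot\det(G^{[k]\setminus S_i}_{J'\setminus J})$. Exhibiting such a $J'$ lying in a $(k+\delta)$-subset already covered by $\mathcal P_{\mathrm d}$, so that the bucket factor inherits non-vanishing from the distance factor, will be the delicate technical step.
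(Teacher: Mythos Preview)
Your approach is essentially the paper's, and the ``delicate technical step'' you anticipate largely evaporates once you adopt the paper's framing of $\mathcal P_{\mathrm d}$. In the proof of Lemma~\ref{spbm-dst} the polynomial $f$ is the product of \emph{all} $k\times k$ minors of $X$ whose determinant is not identically zero (not just one per $(k+\delta)$-subset); the degree of each $x_{i,j}$ in this full product is still at most $\binom{n-1}{k-1}$. With that choice, you do not need $\mathcal P_{\mathrm b}$ at all: it suffices to show, for each bucket $\alpha$-subset $J\subseteq J_i$, that there is \emph{some} $k$-subset $J'\supseteq J$ with $\det(X_{J'})\not\equiv 0$, since then $\det(X_{J'})$ is already a factor of $f$, whence $\det(G_{J'})\neq 0$ at the chosen evaluation and $\mathrm{rank}(G_J)=\alpha$ follows from your block-triangular observation.

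The paper constructs $J'$ exactly along your block-triangular lines: greedily pick buckets $S_1=S_i,S_2,\dots,S_r$ with $\bigcup_\ell S_\ell=[k]$ and each $I_\ell=S_\ell\setminus\bigcup_{\lambda<\ell}S_\lambda\neq\emptyset$; from bucket~$\ell$ take $|I_\ell|$ columns. The resulting bipartite graph has an obvious perfect matching ($I_\ell\leftrightarrow J'_\ell$), which is precisely what makes both diagonal blocks in your factorization $\det(X_{J'})=\det(X_J^{S_i})\cdot\det(X_{J'\setminus J}^{[k]\setminus S_i})$ non-identically zero. Note that ``supports jointly cover $[k]\setminus S_i$'' alone is not enough; you need this matching (Hall-type) structure, which the greedy chain supplies for free.
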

\begin{proof}
The proof is given in Appendix B.
\end{proof}

\begin{lem}\label{min-dst-rglr} Let
$M_0=(m_{i,j})$ be the incidence matrix of $\mathcal S$. For any
$(\mathcal N,\mathcal S)$-group decodable code $\mathcal C$, we
have
\begin{align} \label{min-dst-rglr-eq} d\leq
w_{\min}(M_0)\beta-\Gamma(M_0)+1.\end{align} Moreover, there exist
an $(\mathcal N,\mathcal S)$-group decodable code over the field
of size $q>{n-1\choose k-1}$ with
$d=w_{\min}(M_0)\beta-\Gamma(M_0)+1$.
\end{lem}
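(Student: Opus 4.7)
The plan is to handle the upper bound and the achievability separately. For the upper bound, Lemma \ref{spbm-gdc} tells us that the indicator matrix $M$ of $(\mathcal N, \mathcal S)$ is a support generator matrix of $\mathcal C$, so Lemma \ref{spbm-dst} applies. Writing $d = n - k + 1 - \delta$, condition (3) must hold at this $\delta$. I would apply it to the specific set $I^* = \{i \in [k] : R_{M_0}(i) = R_{M_0}(i^*)\}$, where $i^* \in \Phi_{M_0}$ achieves $\Gamma_{M_0}(i^*) = \Gamma(M_0)$. Since $M$ is obtained from $M_0$ by replicating each column $\beta$ times, the rows of $I^*$ all share the same support in $M$, giving $|\bigcup_{i \in I^*} R_M(i)| = \beta\, w_{\min}(M_0)$ while $|I^*| = \Gamma(M_0)$. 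Substituting into condition (3) yields $\delta \geq n - k - \beta\, w_{\min}(M_0) + \Gamma(M_0)$, equivalent to the claimed upper bound $d \leq \beta\, w_{\min}(M_0) - \Gamma(M_0) + 1$.

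For achievability, by Lemma \ref{gdc-lem} it suffices to verify that $M$ satisfies condition (2) (equivalently (3)) of Lemma \ref{spbm-dst} with $\delta = n - k - \beta\, w_{\min}(M_0) + \Gamma(M_0)$. Because $|\bigcup_{i \in I} R_M(i)| = \beta\, |\bigcup_{i \in I} R_{M_0}(i)|$, condition (3) reduces to the combinatorial statement that $|I| \leq \beta(|T| - w_{\min}(M_0)) + \Gamma(M_0)$ for every non-empty $I \subseteq [k]$ with $T := \bigcup_{i \in I} R_{M_0}(i)$. Enlarging $I$ to $I_T := \{i \in [k] : R_{M_0}(i) \subseteq T\}$ can only strengthen what needs to be shown, so I would prove $|I_T| \leq \beta(|T| - w_{\min}(M_0)) + \Gamma(M_0)$ for every $T \subseteq [t]$ with $|T| \geq w_{\min}(M_0)$, by induction on $p := |T| - w_{\min}(M_0) \geq 0$.

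The base case $p = 0$ follows from the definition of $\Gamma(M_0)$: every $i \in I_T$ has weight at least $w_{\min}(M_0) = |T|$ and support contained in $T$, so its support equals $T$ and $i$ is a minimal row, hence $|I_T| \leq \Gamma(M_0)$. For the inductive step with $p \geq 1$, fix any $j \in T$ and decompose $I_T = I_{T \setminus \{j\}} \sqcup (I_T \cap C_{M_0}(j))$ into rows missing column $j$ and rows using it. The inductive hypothesis gives $|I_{T \setminus \{j\}}| \leq (p-1)\beta + \Gamma(M_0)$, while $|I_T \cap C_{M_0}(j)| \leq |C_{M_0}(j)| = \alpha$ by Remark \ref{rem-inc-mat}. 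Since $\alpha < \beta$, summing gives $|I_T| \leq (p-1)\beta + \Gamma(M_0) + \alpha \leq p\beta + \Gamma(M_0)$, closing the induction.

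The hard part is precisely this inductive bookkeeping: the strict inequality $\alpha < \beta$ supplies exactly the unit of slack needed so that each extra coordinate in $T$ contributes at most $\beta$ new rows to $I_T$. Once condition (3) is verified, Lemma \ref{gdc-lem} yields an $(\mathcal N, \mathcal S)$-group decodable code over any field of size $q > {n-1 \choose k-1}$ with minimum distance at least $\beta\, w_{\min}(M_0) - \Gamma(M_0) + 1$; combined with the upper bound, this forces equality.
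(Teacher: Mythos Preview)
Your argument is correct and noticeably more economical than the paper's. The paper works on the \emph{column} side: it introduces the functions
\[
\xi_{M}(\ell)=\min_{|J|=\ell}\Bigl|\bigcup_{j\in J}C_M(j)\Bigr|,\qquad
\xi_{M_0}(i)=\min_{|J|=i}\Bigl|\bigcup_{j\in J}C_{M_0}(j)\Bigr|,
\]
and proves four claims that together identify the minimal admissible $\delta_0$ in condition~(2) as $\delta_0=(t-w_{\min}(M_0))\beta-\xi_{M_0}(t-w_{\min}(M_0))=n-w_{\min}(M_0)\beta-k+\Gamma(M_0)$. You instead work on the \emph{row} side: the upper bound comes from evaluating condition~(3) of Lemma~\ref{spbm-dst} at the single set $I^\ast$ of all repeats of a minimal row, and achievability is an induction on $|T|-w_{\min}(M_0)$ showing $|I_T|\le\beta(|T|-w_{\min}(M_0))+\Gamma(M_0)$, where the key step uses $|C_{M_0}(j)|=\alpha<\beta$ from Remark~\ref{rem-inc-mat}. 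Your route avoids the auxiliary functions $\xi_M,\xi_{M_0}$ entirely; what the paper's longer computation buys is an explicit description of $\xi_{M_0}$ near the threshold and of the extremal $\delta_0$, neither of which is needed for the lemma as stated.

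Two small points worth tightening. First, Lemma~\ref{gdc-lem} is phrased in terms of condition~(2), so after verifying condition~(3) you should remark that (2) follows by the equivalence in Lemma~\ref{spbm-dst}; this equivalence is stated under the hypothesis $0\le\delta\le n-k$, so you should check that your $\delta=n-k-\beta\,w_{\min}(M_0)+\Gamma(M_0)$ lies in this range. The lower bound $\delta\ge0$ is already contained in your inequality at $I=[k]$ (where $T=[t]$), and $\delta\le n-k$ follows from $\Gamma(M_0)\le\alpha<\beta\le\beta\,w_{\min}(M_0)$, since all $\Gamma(M_0)$ repeated rows lie in any common $S_j$ of size $\alpha$. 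Second, your inductive hypothesis is applied to $T\setminus\{j\}$, which need not arise as $\bigcup_{i\in I}R_{M_0}(i)$ for any $I$; you correctly set up the induction over \emph{all} $T\subseteq[t]$ with $|T|\ge w_{\min}(M_0)$, but it is worth saying explicitly that this generality is what makes the induction close.
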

\begin{proof}
The proof is given in Appendix C.
\end{proof}

Now, we can prove Theorem \ref{up-bnd}.
\begin{proof}[Proof of Theorem \ref{up-bnd}]
Suppose $\mathcal C$ is an $(\alpha, \beta, k, t)$-group decodable
code. By Definition \ref{gdc-pmt}, $\mathcal C$ is an $(\mathcal
N,\mathcal S)$-group decodable code for some $\mathcal
S=\{S_1,\cdots, S_t\}$ and $\mathcal N=\{n_1,\cdots, n_t\}$ such
that $S_i\subseteq[k]$, $|S_i|=\alpha$ and $n_i=\beta$ for all
$i\in[t]$. Let $M_0$ be the incidence matrix of $\mathcal S$. By
Lemma \ref{min-dst-rglr}, it is sufficient to prove
$w_{\min}(M_0)\beta-\Gamma(M_0)+1\leq s\beta-
\left\lceil\frac{k-r}{{t\choose s}}\right\rceil+1$.

By Remark \ref{rem-inc-mat}, each column of $M_0$ has exactly
$\alpha$ ones. Then the total number of ones in $M_0$ is
$N_{\text{one}}=t\alpha$. On the other hand, each row of $M_0$ has
at least $w_{\min}(M_0)$ ones. So $N_{\text{one}}=t\alpha\geq
kw_{\min}(M_0)$, which implies
$w_{\min}(M_0)\leq\frac{t\alpha}{k}$. Since $w_{\min}(M_0)$ is an
integer, then we have
\begin{align} w_{\min}(M_0)\leq\left\lfloor\frac{t\alpha}{k}\right\rfloor
=s.\label{eq2-up-bnd}\end{align} Note that $\Gamma(M_0)\geq 1$. If
$k-r\leq{t\choose s}$, then we have
$\left\lceil\frac{k-r}{{t\choose s}}\right\rceil=1$, and
\eqref{eq2-up-bnd} implies $w_{\min}(M_0)\beta-\Gamma(M_0)+1\leq
s\beta=s\beta-\left\lceil\frac{k-r}{{t\choose s}}\right\rceil+1$.
Thus, we only need to consider $k-r>{t\choose s}$. Again by
\eqref{eq2-up-bnd}, we have the following two cases:

Case 1: $w_{\min}(M_0)=s$. Let $N_s$ be the number of rows of
$M_0$ with weight $s$. Then $M_0$ has $k-N_s$ rows with weight at
least $s+1$. So the total number of ones in $M_0$ is
$N_{\text{one}}=t\alpha=sk+r\geq sN_s+(s+1)(k-N_s)=ks+(k-N_s)$.
Thus,
\begin{align} N_s\geq k-r.\label{eq3-up-bnd}\end{align}
If $\Gamma(M_0)<\left\lceil\frac{k-r}{{t\choose s}}\right\rceil$,
then the repetition number of each row of weight $w_{\min}(M_0)=s$
is at most $\left\lceil\frac{k-r}{{t\choose s}}\right\rceil-1$.
Note that there are at most ${t\choose s}$ binary vector of length
$t$ and weight $s$. Then we have $N_s\leq{t\choose
s}\left(\left\lceil\frac{k-r}{{t\choose
s}}\right\rceil-1\right)<k-r$, which contradicts to
\eqref{eq3-up-bnd}. So we have
$\Gamma(M_0)\geq\left\lceil\frac{k-r}{{t\choose s}}\right\rceil$.
Thus, $w_{\min}(M_0)\beta-\Gamma(M_0)+1\leq
s\beta-\left\lceil\frac{k-r}{{t\choose s}}\right\rceil+1$.

Case 2: $w_{\min}(M_0)\leq s-1$. Note that $t\alpha=sk+r\geq sk$
and $\alpha\leq k$. Then we have $t\geq s$ and ${t-1\choose
s-1}\geq 1$. Thus,
\begin{align*} k-r&\leq k\leq{t-1\choose
s-1}k+\frac{r}{s}{t-1\choose s-1}\\&=\frac{sk+r}{s}{t-1\choose
s-1}=\frac{t\alpha}{s}{t-1\choose s-1}=\alpha{t\choose s}.
\end{align*} So we have $\frac{k-r}{{t\choose
s}}\leq\alpha$, which implies that
\begin{align*} \left\lceil\frac{k-r}{{t\choose
s}}\right\rceil-1<\frac{k-r}{{t\choose
s}}\leq\alpha\leq\beta.\end{align*} Note that $\Gamma(M_0)\geq 1$.
Then $w_{\min}(M_0)\beta-\Gamma(M_0)+1\leq w_{\min}(M_0)\beta\leq
(s-1)\beta=s\beta-\beta\leq s\beta-\left\lceil\frac{k-r}{{t\choose
s}}\right\rceil+1$.

By above discussion, we proved
$w_{\min}(M_0)\beta-\Gamma(M_0)+1\leq
s\beta-\left\lceil\frac{k-r}{{t\choose s}}\right\rceil+1$. By
Lemma \ref{min-dst-rglr}, $d\leq
s\beta-\left\lceil\frac{k-r}{{t\choose s}}\right\rceil+1$.
\end{proof}

\section{Proof of Theorem \ref{up-bnd-tight}}
In this section, we prove Theorem \ref{up-bnd-tight}. We first
give a lemma that will be used in our following discussion.

\begin{lem}\label{mat-contr}
Suppose $t\alpha=sk+r$, where $s\geq 1$ and $0\leq r\leq k-1$. If
$k-r\leq{t\choose s}$, then there exists a $k\times t$ binary
matrix $M_0=(m_{i,j})$ such that: (i) Each column of $M_0$ has
exactly $\alpha ~ 1$s; (ii) $w_{min}(M_0)=s$ and $\Gamma(M_0)=1$.
\end{lem}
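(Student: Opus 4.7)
My plan is to construct $M_0$ by vertically stacking two blocks: a $(k-r)\times t$ block $M_0'$ whose rows are $k-r$ pairwise distinct weight-$s$ binary vectors, on top of an $r\times t$ block $M_0''$ whose rows all have weight $s+1$ (with repetitions allowed). The total number of ones in $M_0$ is then $s(k-r)+(s+1)r=sk+r=t\alpha$, consistent with the target column sum $\alpha$. Because every row of $M_0''$ is strictly heavier than any row of $M_0'$, $w_{\min}(M_0)=s$ follows automatically; pairwise distinctness of the $M_0'$-rows then gives $\Gamma(M_0)=1$.

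I would fix the target column sums as follows. Pick integers $d_j\in\{\lfloor r(s+1)/t\rfloor,\lceil r(s+1)/t\rceil\}$ with $\sum_j d_j=r(s+1)$ to serve as the column profile of $M_0''$, and set $c_j:=\alpha-d_j$ as the column profile of $M_0'$. From $t\alpha\geq sk$ and $\alpha\leq k-1$ one gets $t\geq s+1$, so $r(s+1)/t\leq r$, hence $d_j\in[0,r]$ and $c_j\in[\alpha-r,\alpha]$; moreover $\lceil s(k-r)/t\rceil\leq k-r$ gives $c_j\leq k-r$. Given $(d_j)$, building $M_0''$ is routine: its uniform row sum $s+1\leq t$, column sums $d_j\leq r$, and total weight $r(s+1)$ satisfy the Gale--Ryser feasibility conditions trivially (equivalently, place the $(s+1)r$ ones column-by-column in a round robin across the $r$ rows).

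The main obstacle is the construction of $M_0'$: realizing the column profile $(c_j)$ with $k-r$ \emph{pairwise distinct} rows of weight $s$. Existence of some $(k-r)\times t$ $\{0,1\}$-matrix with row sums $s$ and column sums $c_j$ follows again from Gale--Ryser (using $c_j\leq k-r$ and uniform row sum $s\leq t$). To upgrade to distinct rows, I would iterate $2\times 2$ switches: whenever two rows $i_1,i_2$ share a common support $S$, choose $j\in S$, $j'\notin S$, and a third row $i_3$ with a $0$ at column $j$ and a $1$ at column $j'$; the swap on the four entries $(i_1,j),(i_1,j'),(i_3,j),(i_3,j')$ preserves all row and column sums while changing row $i_1$'s support to $S\triangle\{j,j'\}$, strictly reducing the number of duplicate pairs. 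The hypothesis $k-r\leq\binom{t}{s}$ guarantees enough distinct $s$-subsets for this process to terminate at a matrix whose rows are pairwise distinct; the delicate point is the existence of a suitable $i_3$ at each step, which relies on the balanced column profile keeping all columns far from saturation (the degenerate case, where two columns happen to coincide as $\{0,1\}$-vectors, is resolved by choosing the pair $(j,j')$ elsewhere).
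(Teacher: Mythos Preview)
Your overall decomposition—stacking a $(k-r)\times t$ block $M_0'$ of distinct weight-$s$ rows on top of an $r\times t$ block $M_0''$ of weight-$(s+1)$ rows—matches the paper's, and your parameter checks ($t\ge s+1$, $d_j\in[0,r]$, $c_j\in[0,k-r]$, Gale--Ryser feasibility for both blocks) are fine. The difference is in the order of operations, and that is where your argument breaks down.

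The gap is the claim that a $2\times 2$ switch ``strictly reduces the number of duplicate pairs.'' A switch on rows $i_1,i_3$ alters \emph{both} rows, so while the old pair $(i_1,i_2)$ is destroyed, the new $i_1$ or the new $i_3$ may now coincide with some other row. Concretely, with rows $\{1,2\},\{1,2\},\{1,3\},\{2,3\}$ every admissible switch merely permutes the multiset of supports, so the duplicate count never drops; indeed that column profile $(3,3,2,0)$ admits no distinct-row realization at all. This shows that $k-r\le\binom{t}{s}$ by itself does not make the process terminate: you genuinely need to exploit the balanced profile $c_j\in\{\lfloor s(k-r)/t\rfloor,\lceil s(k-r)/t\rceil\}$, and you have not shown how. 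This is precisely the ``delicate point'' you flag but leave unresolved.

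The paper sidesteps this by reversing the order: it first writes down $k-r$ \emph{arbitrary} distinct weight-$s$ rows and $r$ arbitrary weight-$(s+1)$ rows (so condition~(ii) holds from the outset), and then repairs the column sums one unit at a time. If column $j_1$ is light and $j_2$ is heavy, a single $1$ is moved from $j_2$ to $j_1$ within one row; a short pigeonhole comparison of $I_1=\{i\le k-r:m_{i,j_1}=1,\,m_{i,j_2}=0\}$ and $I_2=\{i\le k-r:m_{i,j_1}=0,\,m_{i,j_2}=1\}$, using $|I_1|<|I_2|$, guarantees a choice of row for which the move creates no new duplicate among the weight-$s$ rows. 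Termination is immediate because $\sum_j\bigl||C(j)|-\alpha\bigr|$ drops by $2$ at each step. The asymmetry is real: preserving distinctness while equalizing columns has a clean monotone potential, whereas preserving column sums while removing duplicates does not.
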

\begin{proof}
The proof is given in Appendix D.
\end{proof}

Now we can prove Theorem \ref{up-bnd-tight}
\begin{proof}[Proof of Theorem \ref{up-bnd-tight}]
By Lemma \ref{min-dst-rglr}, it is sufficient to construct a
$k\times t$ binary matrix $M_0$ such that each column has exactly
$\alpha ~ 1$s, $w_{\min}(M_0)=s$ and $\Gamma(M_0)=
\left\lceil\frac{k-r}{{t\choose s}}\right\rceil$. We have the
following two cases:

Case 1: $k-r\leq {t\choose s}$. Then
$\left\lceil\frac{k-r}{{t\choose s}}\right\rceil=1$ and $M_0$ can
be constructed by Lemma \ref{mat-contr}.

Case 2: $k-r>{t\choose s}$. In this case, we can assume
\begin{eqnarray}\label{proof-th4-eq1}
k-r=u{t\choose
s}+v
\end{eqnarray}
where $u\geq 1$ and $0\leq v\leq {t\choose s}-1$. Since
$t\alpha=sk+r$, then
\begin{align}\label{proof-th4-eq2}
t\left[\alpha-u{t-1\choose s-1}\right]&=t\alpha-tu{t-1\choose
s-1}\nonumber\\&=t\alpha-su{t\choose s}\nonumber\\
&=t\alpha-s(k-r-v)\nonumber\\
&=(t\alpha-sk)+s(r+v)\nonumber\\
&=r+s(r+v).
\end{align}
Let $M_1$ be a $u{t\choose s}\times t$ binary matrix such that
each binary vector of length $t$ and weight $s$ appears in $M_1$
exactly $u$ times. Then each column of $M_1$ has exactly
$u{t-1\choose s-1} ~ 1$s. We can further construct a $(r+v)\times
t$ matrix $M_2$ and let $M_0=\left[^{M_1}_{M_2}\right]$. To do so,
we need to consider the following two sub-cases:

Case 2.1: $v=0$. By \eqref{proof-th4-eq2},
$t\left[\alpha-u{t-1\choose s-1}\right]=(s+1)r$. It is easy to
construct an $r\times t$ binary matrix $M_2$ such that each column
has exactly $\alpha-u{t-1\choose s-1} ~1$s and each row has
exactly $s+1 ~1$s. Let $M_0=\left[^{M_1}_{M_2}\right]$. Then $M_0$
is a $k\times t$ binary matrix and each column has exactly $\alpha
~ 1$s. Moreover, by the construction, we have $w_{min}(M_0)=s$ and
$\Gamma(M)=u=\left\lceil\frac{k-r}{{t\choose s}}\right\rceil$.

Case 2.2: $v\neq 0$. Then $0\leq r\leq v+r-1$. Note that $0\leq
v\leq {t\choose s}-1$ and by \eqref{proof-th4-eq2},
$t\left[\alpha-u{t-1\choose s-1}\right]=s(r+v)+r$. By the same
discussion as in Lemma \ref{mat-contr}, we can construct a
$(r+v)\times t$ binary matrix $M_2$ such that: (i) Each column of
$M_2$ has exactly $\alpha-u{t-1\choose s-1} ~ 1$s; (ii)
$w_{min}(M_2)=s$ and $\Gamma(M_2)=1$. Let
$M_0=\left[^{M_1}_{M_2}\right]$. Then $M_0$ is a $k\times t$
binary matrix and each column has exactly $\alpha ~ 1$s. Moreover,
by the construction, we have $w_{min}(M_0)=s$ and
$\Gamma(M)=u+1=\left\lceil\frac{k-r}{{t\choose s}}\right\rceil$.

Thus, we can always construct a $k\times t$ binary matrix $M_0$
such that each column has exactly $\alpha ~ 1$s, $w_{\min}(M_0)=s$
and $\Gamma(M_0)=\left\lceil\frac{k-r}{{t\choose s}}\right\rceil$.
By Lemma \ref{min-dst-rglr}, there exist $(\mathcal N,\mathcal
S)$-group decodable code over the field of size $q>{n-1\choose
k-1}$ with
$d=w_{\min}(M_0)\beta-\Gamma(M_0)+1=s\beta-\left\lceil\frac{k-r}{{t\choose
s}}\right\rceil+1$.
\end{proof}

\section{Conclusions}
We introduce a new family of erasure codes, called group decodable
codes (GDC), for distributed storage systems that allows both
locally repairable and group decodable. Thus, such codes can be
viewed as a subclass of locally repairable codes (LRC). We derive
an upper bound on the minimum distance of such codes and prove
that the bound is achievable for all possible code parameters.
However, since GDC is a subclass of LRC, the minimum distance
bound of GDC is smaller than the minimum distance bound of LRC in
general.

\appendices

\section{Proof of Lemma \ref{spbm-dst}}
The proof consists of three steps: In the first step, we prove
condition (1) implies condition (2); In the second step, we prove
condition (2) implies condition (3); In the third step, we prove
that if condition (3) holds, then there exists an $[n,k]$ linear
code over the field of size $q>{n-1\choose k-1}$ with a support
generator matrix $M$ and minimum distance $d\geq n-k+1-\delta$.

\begin{proof}[Proof of Lemma \ref{spbm-dst}]
(1) $\Rightarrow$ (2). Suppose condition (1) holds. Let
$G=(a_{i,j})$ be a generator matrix of $\mathcal C$ supported by
$M$. Then for any $i\in[k]$ and $j\in[n]$, $m_{i,j}=0$ implies
$a_{i,j}=0$. Given any $\ell\in[k]$, since any $k+\delta$ columns
of $G$ has rank $k~($Lemma \ref{lc-gdc}$)$, then any $\ell+\delta$
columns of $G$ has rank at least $\ell$, i.e.,
$\text{rank}(G_J)\geq\ell$ for any $J\subseteq[n]$ of size
$|J|=\ell+\delta$. So $G_J$ has at most $k-\ell$ rows that are all
zeros, which implies $|\textstyle\bigcup_{j\in J} C_M(j)|\geq
\ell$.

(2) $\Rightarrow$ (3). We can prove this by contradiction. Suppose
$\emptyset\neq I\subseteq[k]$ and
$|\mathop{\textstyle\bigcup}_{i\in I}R_M(i)|<n-k+|I|-\delta$. Let
$J'=[n]\backslash\mathop{\textstyle\bigcup}_{i\in I}R_M(i)$. Then
$|J'|>k-|I|+\delta$ and $m_{i,j}=0$ for all $i\in I$ and $j\in
J'$. Let $\ell=k-|I|+1$ and $J\subseteq J'$ such that $|J|=\ell$.
Then $\mathop{\textstyle\bigcup}_{j\in J}C_M(j)\subseteq
[k]\backslash I$. So $|\mathop{\textstyle\bigcup}_{j\in
J}C_M(j)|\leq k-|I|=\ell-1$, which contradicts to condition (2).
Thus, it must be that $|\mathop{\textstyle\bigcup}_{i\in
I}R_M(i)|\geq n-k+|I|-\delta$.

(3) $\Rightarrow$ (1). The key is to construct a $k\times n$
matrix $G$ over a field $\mathbb F$ of size $q>{n-1\choose k-1}$
such that $G$ is supported by $M$ and any $k+\delta$ columns of
$G$ has rank $k$.

Let $X=(x_{i,j})_{k\times n}$ such that $x_{i,j}$ is an
indeterminant if $m_{i,j}=1$ and $x_{i,j}=0$ if $m_{i,j}=0$. Let
$f(\cdots,x_{i,j},\cdots)=\Pi_{P}\text{det}(P)$, where the product
is taken over all $k$ by $k$ submatrix $P$ of $X$ with
$\text{det}(P) \not\equiv \boldsymbol{O}$. Note that each
$x_{i,j}$ belongs to at most ${n-1 \choose k-1}$ submatrix $P$ and
has degree at most $1$ in each $\text{det}(P)$. Then $x_{i,j}$ has
degree at most ${n-1 \choose k-1}$ in $f(\cdots,x_{i,j},\cdots)$.
Note that $f(\cdots,x_{i,j},\cdots)=\prod_{P}\text{det}(P)
\not\equiv \boldsymbol{O}$. By [14, Lemma 4], if $|\mathbb F|>{n-1
\choose k-1}$, then there exist $a_{i,j}\in\mathbb F$ $($for $i,j$
where $m_{i,j}=1)$ such that $f(\cdots,a_{i,j},\cdots)\neq 0$. Let
$G=(a_{i,j})$~$($for $i,j$ where $m_{i,j}=0$, we set $a_{i,j}=0)$.
Then $G$ is supported by $M$. We will prove $\text{rank}(G_J)=k$
for any $J\subseteq[n]$ with $|J|=k+\delta$. By construction of
$G$, it is sufficient to prove $\text{det}(X_{J_0}) \not\equiv
\boldsymbol{O}$ for some $J_0\subseteq J$ with $|J_0|=k$.

Let $\mathcal G_J$ be the bipartite graph with vertex set $U\cup
V$, where $U=\{u_i;i\in[k]\}$, $V=\{v_{j};j\in J\}$ and $U\cap
V=\emptyset$ such that $(u_i,v_j)$ is an edge of $\mathcal G_I$ if
and only if $m_{i,j}=1$. Then for each $u_i\in U$, the set of all
neighbors of $u_i$ is $N(u_i)=\{v_j;j\in R_{M}(i)\cap J\}$. So for
all $I\subseteq [k]$, the set of all neighbors of the vertices in
$S=\{u_i;i\in I\}$ is $N(S)=\{v_j;j\in(\textstyle\bigcup_{i\in
I}R_{M}(i))\cap J\}$. By assumption,
$\left|\textstyle\bigcup_{i\in I}R_{M}(i)\right|\geq
n-k+|I|-\delta$ and $|J|=k+\delta$. So we have
$|N(S)|=\left|\textstyle\bigcup_{i\in I}R_{M}(i)\cap J\right|\geq
\left|\textstyle\bigcup_{i\in
I}R_{M}(i)\right|-\left|[n]\backslash J\right|=|I|=|S|$. By Hall's
Theorem $($[16, p. 419]$)$, $\mathcal G_J$ has a matching which
covers every vertex in $U$. Let $\mathcal M=\{(u_1,v_{\ell_1}),
\cdots, (u_k,v_{\ell_k})\}$ be such a matching and
$J_0=\{\ell_1,\cdots,\ell_k\}$. Let $\mathcal G_{J_0}$ be the
subgraph of $\mathcal G_J$ generated by $U\cup\{v_j;j\in J_0\}$.
Then $\mathcal M$ is a perfect matching of $\mathcal G_{J_0}$ and
$X_{J_0}$ is the Edmonds matrix of $\mathcal G_{J_0}$. It is well
known $($[17, p. 167]$)$ that a bipartite graph has a perfect
matching if and only if the determinant of its Edmonds matrix is
not identically zero. Hence $\text{det}(X_{J_0}) \not\equiv
\boldsymbol{O}$.

By the construction of $G$, we have $\text{det}(G_{J_0})\neq 0$
and $\text{rank}(G_J)=k$, where $J$ is any subset of $[n]$ and
$|J|=k+\delta$. Let $\mathcal C$ be the $[n,k]$ linear code
generated by $G$. By Lemma \ref{lc-gdc}, $d\geq n-k+1-\delta$.
Note that we have proved that $G$ is supported by $M$. So $M$ is a
support generator matrix of $\mathcal C$.
\end{proof}

\section{Proof of Lemma \ref{gdc-lem}}
\begin{proof}[Proof of Lemma \ref{gdc-lem}]
Let $G$ and $\mathcal C$ be constructed as in the proof of Lemma
\ref{spbm-dst}. We will prove that $\mathcal C$ is an $(\mathcal
N,\mathcal S)$-group decodable code.

By Lemma \ref{spbm-gdc}, we need to prove
$\text{rank}(G_{J})=\alpha$ for each $i\in[t]$ and each
$J\subseteq J_i$ of size $|J|=\alpha$. To prove this, it is
sufficient to construct a subset $J_0\subseteq[n]$ such that
$J\subseteq J_0$ and $\text{rank}(G_{J_0})=k$. To simplify
notations, without loss of generality, we can assume $J\subseteq
J_1$, where $J_1$ is defined by \eqref{def-Ji}. Since
$\bigcup_{i=1}^tS_i=[k]$, we can always find a collection
$\mathcal S'\subseteq\mathcal S$ (By proper naming, we can assume
$\mathcal S'=\{S_{1},S_{2},\cdots,S_{r}\}.)$ such that
$\bigcup_{i=1}^rS_{i}=[k]$ and
$I_\ell=S_{\ell}\backslash\bigcup_{i=1}^{\ell-1}
S_{i}\neq\emptyset, \ell=2,\cdots,r$. Then $\{I_1,
I_2,\cdots,I_r\}$ is a partition of $[k]$, where $I_1=S_1$. Let
$J'_1=J$ and for each $\ell\in\{2,\cdots,r\}$, pick an
$J'_\ell\subseteq J_{\ell}$ with $|J'_\ell|=|I_\ell|$. Let
$J_0=J'_1\cup J'_2\cup\cdots\cup J'_r$. Then $|J_0|=k$. Let
$\mathcal G_{J_0}$ be the bipartite graph with vertex set $U\cup
V$, where $U=\{u_i;i\in[k]\}$, $V=\{v_{j};j\in J_0\}$ and $U\cap
V=\emptyset$ such that $(u_i,v_j)$ is an edge of $\mathcal
G_{J_0}$ if and only if $m_{i,j}=1$. By Definition
\ref{ext-inc-mx}, $m_{i,j}=1$ for each $i\in I_\ell$, $j\in
J'_\ell$ and $\ell\in[r]$. So each subgraph $\mathcal
G_{I_\ell,J'_\ell}$ is a complete bipartite graph and has a
perfect matching, where $\mathcal G_{I_\ell,J'_\ell}$ is generated
by $\{u_i;i\in I_\ell\}\cup\{v_j;j\in J'_\ell\}$. So the bipartite
graph $\mathcal G_{J_0}$ has a perfect matching. By a similar
discussion as in the proof of Lemma \ref{spbm-dst},
$\text{rank}(G_{J_0})=k$. 
So $\text{rank}(G_{J})=\alpha$.

Moreover, by the proof of Lemma \ref{spbm-dst}, $G$ is supported
by $M$ and is a generator matrix of $\mathcal C$. So by Lemma
\ref{spbm-gdc}, $\mathcal C$ is an $(\mathcal N,\mathcal S)$-group
decodable code. By Lemma \ref{spbm-dst}, $d\geq n-k+1-\delta$. So
$\mathcal C$ is a code that satisfies our requirements.
\end{proof}

As an example, let $M_0$ be the matrix $A$ in Example
\ref{ex-ntn}. Then $\alpha=3,k=7$ and $t=8$. Let $\beta=5$. Then
$M$ is obtained from $M_0$ by replicating each column of $M_0$ by
$5$ times. By \eqref{def-Ji},
$J_1=\{1,\cdots,5\},\cdots,J_8=\{36,\cdots,40\}$. Let
$J=\{1,3,5\}\subseteq J_1$. We have $I_1=S_1=\{1,4,6\}$,
$I_2=S_2\backslash S_1=\{2,5,7\}$ and $I_3=S_3\backslash (S_1\cup
S_2)=\{3\}$. Moreover, we can pick $J_1'=J, J_2'=\{6,7,8\}$ and
$J_3'=\{11\}$. Then $G_{J_0}$ is of the following form:
\begin{eqnarray*}
~ ~ ~ \left[\begin{array}{ccccccc}
* & * & * & 0 & 0 & 0 & *\\
0 & 0 & 0 & * & * & * & 0\\
0 & 0 & 0 & 0 & 0 & 0 & *\\
* & * & * & 0 & 0 & 0 & 0\\
0 & 0 & 0 & * & * & * & 0\\
* & * & * & 0 & 0 & 0 & *\\
0 & 0 & 0 & * & * & * & 0\\
\end{array}\right]
\end{eqnarray*}
where stars denote the nonzero entries of $G_{J_0}$. Clearly,
$\{(1,1),(2,4),(3,6),(4,2),(5,5),(6,7),(7,3)\}$ is a perfect
matching of the corresponding bipartite graph $\mathcal G_{J_0}$.
By construction of $G$, we have $\text{det}(G_{J_0})\neq 0$ and
$\text{rank}(G_{J_0})=k=7$.

\section{Proof of Lemma \ref{min-dst-rglr}}
\begin{proof}[Proof of Lemma \ref{min-dst-rglr}]
Let $M$ be the indicator matrix of $(\mathcal N, \mathcal S)$. By
Lemma \ref{spbm-gdc}, $M$ is a support generator matrix of
$\mathcal C$. Let $\delta_0$ be the smallest number such that
$|\mathop{\textstyle\bigcup}_{j\in J}C_M(j)|\geq\ell$ for all
$\ell\in[k]$ and all $J\subseteq[n]$ of size $|J|=\ell+\delta_0$.
Then by Lemma \ref{spbm-dst}, $d\leq n-k+1-\delta_0$. By Lemma
\ref{spbm-dst} and \ref{gdc-lem}, there exists an $(\mathcal
N,\mathcal S)$-group decodable code over the field $\mathbb F$ of
size $q>{n-1\choose k-1}$ with $d=n-k+1-\delta_0$. Thus, to prove
this lemma, the key is to prove that
$\delta_0=n-w_{\min}(M_0)\beta-k+\Gamma(M_0)$.

By Definition \ref{ext-inc-mx}, $M_0$ is a $k\times t$ binary
matrix and $M$ is a $k\times n$ binary matrix such that
$C_{M_0}(i)=S_i$ for all $i\in[t]$ and $C_{M}(j)=S_i$ for all
$i\in[t]$ and $j\in J_i$. For each $\ell\in[n]$, let
\begin{align}
\xi_{M}(\ell)=\min\limits_{J\subseteq[n],|J|=\ell}
\left|\textstyle\bigcup\limits_{j\in
J}C_{M}(j)\right|.\label{M-clmn-supp}\end{align} Then by
definition of $\delta_0$, we have
\begin{align}
\delta_0=\min\{\delta; 0\leq\delta\leq
n-k,\xi_{M}(\ell+\delta)\geq\ell,
\forall\ell\in[k]\}.\label{dlt-min-ell}\end{align} For each
$i\in[t]$, let
\begin{align}
\xi_{M_0}(i)=\min\limits_{J\subseteq[n],|J|=i}
\left|\textstyle\bigcup\limits_{j\in
J}C_{M_0}(j)\right|.\label{M0-clmn-supp}\end{align} Then we have
the following four claims:\\
\noindent\textbf{Claim 1}: $\xi_{M_0}( i_0)=k-\Gamma(M_0)<
k=\xi_{M_0}( i_0+1)=\cdots=\xi_{M_0}(t)$,
where $ i_0=t-w_{\min}(M_0)$.\\
\noindent\textbf{Claim 2}: For all
$i\in[t]$ and $\ell\in J_i$, $\xi_{M}(\ell)=\xi_{M_0}(i)$.\\
\noindent\textbf{Claim 3}: $\ell'-\xi_{M}(\ell')\leq
i_0\beta-\xi_{M}( i_0\beta), ~
\forall\ell'\in[ i_0\beta]\}.$\\
\noindent\textbf{Claim 4}:
$\delta_0= i_0\beta-\xi_{M_0}( i_0)$.\\
Note that $n=t\beta$. Then Claims 1 and 4 imply that
$\delta_0=n-w_{\min}(M_0)\beta-k+\Gamma(M_0)$, which completes the
proof.
\end{proof}

\begin{proof}[Proof of Claim 1] Suppose $J\subseteq[t]$ and
$ i_0+1\leq|J|\leq t$. Then $\textstyle\bigcup_{j\in
J}C_{M_0}(j)=[k]$. Otherwise, there is an $\ell\in[k]$ such that
$\ell\notin C_{M_0}(j)$ for all $j\in J$, which implies that
$m_{\ell,j}=0$ for all $j\in J$. So
$R_{M_0}(\ell)\subseteq[t]\backslash J$ and
$|R_{M_0}(\ell)|\leq|[t]\backslash J|=t-|J|\leq
t-(i_0+1)=w_{\min}(M_0)-1$, which contradicts to
\eqref{eq-min-wght}. Thus, we proved that $\textstyle\bigcup_{j\in
J}C_{M_0}(j)=[k]$. By \eqref{M0-clmn-supp}, we have
$\xi_{M_0}(i)=k$ for $i_0+1\leq i\leq t$.

Now, suppose $J\subseteq[t]$ and $|J|= i_0=t-w_{\min}(M_0)$. We
have the following two cases:

Case 1: $J=[t]\backslash R_{M_0}(\ell)$ for some $\ell\in[k]$ such
that $|R_{M_0}(\ell)|=w_{\text{min}}(M_0)$. Then $|\bigcup_{j\in
J}R_{M_0}(j)|=k-\Gamma_{M_0}(\ell)$. This can be proved as
follows:

For each $\ell'\in[k]$ such that $R_{M_0}(\ell')=R_{M_0}(\ell)$,
we have $m_{\ell',j}=m_{\ell,j}=0$ for all $j\in J$. Thus,
$\ell'\notin\bigcup_{j\in J}C_{M_0}(j)$.

For each $\ell'\in[k]$ such that $R_{M_0}(\ell')\neq
R_{M_0}(\ell)$, since $|R_{M_0}(\ell)|=w_{\text{min}}(M_0)$, then
$R_{M_0}(\ell')\nsubseteq R_{M_0}(\ell)$. Note that
$J=[t]\backslash R_{M_0}(\ell)$. Then $R_{M_0}(\ell')\cap
J\neq\emptyset$ and $m_{\ell',j}\neq 0$ for some $j\in J$. So
$\ell'\in C_{M_0}(j)$ and $\ell'\in\bigcup_{j\in J}C_{M_0}(j)$.

Thus, for each $\ell'\in[k]$, $\ell'\notin\bigcup_{j\in
J}C_{M_0}(j)$ if and only if $R_{M_0}(\ell)=R_{M_0}(\ell)$. So
$|\bigcup_{j\in J}C_{M_0}(j)|=k-\Gamma_M(\ell)$.

Case 2: $J\neq[t]\backslash R_{M_0}(\ell)$ for all $\ell\in[k]$
such that $|R_{M_0}(\ell)|=w_{\min}(M_0)$. Then $|\bigcup_{j\in
J}C_{M_0}(j)|=k$. Otherwise, there is an $\ell'\in[k]$ such that
$\ell'\notin C_{M_0}(j)$ for all $j\in J$, which implies that
$m_{\ell',j}=0$ for all $j\in J$, and hence
$R_{M_0}(\ell')\subseteq[t]\backslash J$. Note that
$|J|=t-w_{\text{min}}(M_0)$. Then
$|R_{M_0}(\ell')|\leq|[t]\backslash J|=t-|J|=w_{\text{min}}(M_0)$.
Thus, $|R_{M_0}(\ell')|=w_{\text{min}}(M)=t-|J|$ and
$J=[t]\backslash R_{M_0}(\ell')$, which contradicts to assumption
on $J$.

By the above discussion, we proved that for each $J\subseteq[n]$
of size $|J|= i_0$, either $|\bigcup_{j\in
J}C_{M_0}(j)|=k-\Gamma_{M_0}(\ell)$ for some $\ell\in[k]$ with
$|R_{M_0}(\ell)|=w_{\min}(M_0)$ or $|\bigcup_{j\in
J}C_{M_0}(j)|=k$. Thus, by \eqref{M0-clmn-supp} and
\eqref{eq-gama-mat}, $\xi_{M_0}( i_0)=k-\Gamma(M_0)$.
\end{proof}

\begin{proof}[Proof of Claim 2]
From Definition \ref{ext-inc-mx}, we have
\begin{align}\label{cup-spt-cap}
\textstyle\bigcup\limits_{j\in
J}C_{M}(j)=\textstyle\bigcup\limits_{i'\in[t]: J\cap
J_{i'}\neq\emptyset}S_{i'}, ~ \forall J\subseteq[n].\end{align}

\textbf{Firstly}, we prove $\left|\textstyle\bigcup_{j\in
J}C_{M}(j)\right|\geq\xi_{M_0}(i)$ for each $J\subseteq[n]$ of
size $|J|=\ell$.

By \eqref{def-Ji}, we have
$$(i-1)\beta+1\leq |J|\leq i\beta.$$ Note that by \eqref{def-Ji},
$|J_i|=\beta$. Then the number of $i'$ such that $J\cap
J_{i'}\neq\emptyset$ is at least $i$. By \eqref{cup-spt-cap} and
\eqref{M0-clmn-supp}, we have
\begin{align*}
\left|\textstyle\bigcup\limits_{j\in
J}C_{M}(j)\right|&=\left|\textstyle\bigcup\limits_{i'\in[t]: J\cap
J_{i'}\neq\emptyset}S_{i'}\right|\\&=\left|\textstyle\bigcup\limits_{i'\in[t]:
J\cap J_{i'}\neq\emptyset}C_{M_0}(i')\right|\\&\geq\xi_{M_0}(i).
\end{align*}
The second equation holds because by Definition \ref{ext-inc-mx},
for each $i'\in[t]$, $C_{M_0}(i')=S_{i'}$. So by
\eqref{M-clmn-supp}, we have $\xi_{M}(\gamma)\geq\xi_{M_0}(i)$.

\textbf{Secondly}, we prove there exists a $J\subseteq[n]$ of size
$|J|=\ell$ such that $\left|\textstyle\bigcup_{j\in
J}C_{M}(j)\right|=\xi_{M_0}(i)$.

By \eqref{M0-clmn-supp}, there is a
$\{j_1,\cdots,j_i\}\subseteq[t]$ such that
\begin{align}\label{eq3-dlt-gdc-gm}
\xi_{M_0}(i)=\left|\textstyle\bigcup_{\lambda=1}^iC_{M_0}(j_\lambda)\right|=
\left|\textstyle\bigcup_{\lambda=1}^{i}S_{j_\lambda}\right|.\end{align}
Since $\ell\in J_i$, then by (\ref{def-Ji}),
$\left|\textstyle\bigcup_{\lambda=1}^{i-1}J_{j_\lambda}\right|
=(i-1)\beta<\ell\leq
\left|\textstyle\bigcup_{\lambda=1}^{i}J_{j_\lambda}\right|=i\beta$.
So we can always find a subset $J\subseteq[n]$ such that
$\textstyle\bigcup_{\lambda=1}^{i-1}J_{j_\lambda}\subsetneq
J\subseteq \textstyle\bigcup_{\lambda=1}^{i}J_{j_\lambda}$ and
$|J|=\ell$. Then by \eqref{cup-spt-cap} and
\eqref{eq3-dlt-gdc-gm}, we have
\begin{align*}
\left|\textstyle\bigcup\limits_{j\in J}C_{M}(j)\right|
&=\left|\textstyle\bigcup\limits_{i'\in[t]: J\cap
J_{i'}\neq\emptyset}S_{i'}\right|\\
&=\left|\textstyle\bigcup_{\lambda=1}^{i}S_{j_\lambda}\right|\\&=\xi_{M_0}(i).
\end{align*}

Above discussion implies that
$\xi_{M_0}(i)=\min\limits_{J\subseteq[n],|J|=\ell}
\left|\textstyle\bigcup\limits_{j\in J}C_{M}(j)\right|$. By
\eqref{M-clmn-supp}, we have $\xi_{M}(\ell)=\xi_{M_0}(i)$.
\end{proof}

\begin{proof}[Proof of Claim 3]
We first prove \begin{align}\label{eq1-claim-3}
i\beta-\xi_{M}(i\beta)\leq  i_0\beta-\xi_{M}( i_0\beta), ~ \forall
i\in[i_0].\end{align}

For each $i\in\{1,2,\cdots,t-1\}$, by \eqref{M0-clmn-supp}, there
exists a $J'\subseteq[t]$ of size $|J'|=i$ such that
$$\xi_{M_0}(i)=\left|\textstyle\bigcup_{j\in J'}
C_{M_0}(j)\right|.$$ Pick a $j_0\in[t]\backslash J'$ and let
$J=J'\cup\{j_0\}$. Then by \eqref{M0-clmn-supp},
$$\xi_{M_0}(i+1)\leq\left|\textstyle\bigcup_{j\in J}
C_{M_0}(j)\right|.$$ Above two equations imply that
\begin{align*}
\xi_{M_0}(i+1)-\xi_{M_0}(i)&\leq\left|\textstyle\bigcup_{j\in J}
C_{M_0}(j)\right|-\left|\textstyle\bigcup_{j\in J'}
C_{M_0}(j)\right|\\&\leq|C_{M_0}(j_0)|=|S_{j_0}|=\alpha\\&\leq\beta.
\end{align*}
Combining this with Claim 2, we have
\begin{align*}
i\beta-\xi_{M}(i\beta)&=i\beta-\xi_{M_0}(i)\\
&\leq(i+1)\beta-\xi_{M_0}((i+1)\beta)\\
&=(i+1)\beta-\xi_{M}((i+1)\beta).
\end{align*}
By induction, we have $$\beta-\xi_{M}(\beta)\leq
2\beta-\xi_{M}(2\beta)\leq\cdots\leq
 i_0\beta-\xi_{M}( i_0\beta),$$ which proves
\eqref{eq1-claim-3}.

Now, we can prove Claim 3. Given $i\in[i_0]$ and $\ell'\in J_i$.
Since by \eqref{def-Ji}, $(i-1)\beta+1\leq\ell'\leq i\beta$, and
by Claim 2, $\xi_{M}(\ell')=\xi_{M_0}(i)=\xi_{M}(i\beta)$, then
$$\ell'-\xi_{M}(\ell')\leq i\beta-\xi_{M}(i\beta).$$ Combining this with
\eqref{eq1-claim-3}, we have
$$\ell'-\xi_{M}(\ell')\leq i_0\beta-\xi_{M}(i_0\beta).$$
Note that by \eqref{def-Ji}, $[i_0\beta]=\{1,2,\cdots,i_0\beta\}=
J_1\cup J_2\cup\cdots\cup J_{i_0}$. Thus,
$\ell'-\xi_{M}(\ell')\leq i_0\beta-\xi_{M}( i_0\beta), ~
\forall\ell'\in[i_0\beta]\}.$
\end{proof}

\begin{proof}[Proof of Claim 4]
Denote $\delta'_0= i_0\beta-\xi_{M_0}( i_0)$. We need to prove
$\delta_0=\delta_0'$. Since by Claim 2, $\xi_{M}(
i_0\beta)=\xi_{M_0}( i_0)$, then we have $\delta'_0=
i_0\beta-\xi_{M}( i_0\beta)$.

\textbf{Firstly}, we prove $\xi_M(\ell+\delta_0')\geq\ell$ for all
$\ell\in[k].$

Suppose $\ell\in[k]$. If $\ell+\delta_0'\geq i_0\beta+1$, then by
\eqref{def-Ji}, $\ell+\delta_0'\in J_i$ for some $i\in\{
i_0+1,\cdots,t\}$. By Claim 1 and 2,
$\xi_M(\ell+\delta_0')=\xi_{M_0}(i)=k\geq\ell, \forall
\ell\in[k]$. If $\ell+\delta_0'\leq i_0\beta$, by Claim 3,
$(\ell+\delta_0')-\xi_{M}(\ell+\delta_0') \leq
i_0\beta-\xi_{M}(i_0\beta)=\delta_0'$. So
$\xi_M(\ell+\delta_0')\geq\ell$. Thus,
$\xi_M(\ell+\delta_0')\geq\ell$ for all $\ell\in[k].$

\textbf{Secondly}, we prove that if $\delta'<\delta_0'$, then
$\xi_M(\ell+\delta_0')<\ell$ for some $\ell\in[k]$. We can prove
this by contradiction.

Suppose $\xi_M(\ell+\delta_0')\geq\ell$ for all $\ell\in[k]$. We
have the following two cases:

Case 1: $ i_0\beta-\delta'\in[k]$. Note that $ i_0\beta-\xi_{M}(
i_0\beta)=\delta'_0>\delta'$. Then $\xi_{M}( i_0\beta)<
i_0\beta-\delta'$. Let $\ell=i_0\beta-\delta'$. Then $\ell\in[k]$
and $\xi_{M}(\ell+\delta')<\ell$, which contradicts to assumption.

Case 2: $ i_0\beta-\delta'\notin[k]$. Since
$i_0\beta-\xi_{M}(i_0\beta)=\delta'_0>\delta'$, then
$i_0\beta-\delta'> i_0\beta-\delta'_0=\xi_{M}( i_0\beta)>0$. So we
have $i_0\beta-\delta'>k$ and $i_0\beta>k+\delta'$. By
\eqref{M-clmn-supp} and assumption, we have
$$\xi_{M}( i_0\beta)\geq\xi_{M}(k+\delta')\geq k.$$ By
Claim 2, $\xi_{M}( i_0\beta)=\xi_{M_0}( i_0)$. Then above equation
implies $\xi_{M}( i_0\beta)=\xi_{M_0}( i_0)\geq k$, which
contradicts to Claim 1.

In both cases, we can derive a contradiction. Thus, we conclude
that $\xi_M(\ell+\delta_0')<\ell$ for some $\ell\in[k]$.

Above discussion shows that $\delta_0'$ is the smallest number
that satisfies the condition that $\xi_M(\ell+\delta_0')\geq\ell,
~ \forall \ell\in[k]$.

\textbf{Thirdly}, we prove $\delta_0'\leq n-k$.

Let $J_0$ and $\mathcal G_{J_0}$ be constructed as in the proof of
Lemma \ref{gdc-lem} $($We can denote
$J_0=\{j_1,j_2,\cdots,j_k\}.)$. Then $\mathcal G_{J_0}$ has a
perfect matching. Thus, there exists a permutation
$(i_1,i_2,\cdots,i_k)$ of $(1,2,\cdots,k)$ such that
$m_{i_\lambda,j_\lambda}=1$ for all $\lambda\in [k]$ and we have
$|\textstyle\bigcup_{j\in J'}C_M(j)|\geq|J'|$ for all $J'\subseteq
J_0$. Now, for any $\ell\in[k]$ and $J\subseteq[n]$ of size
$|J|=\ell+n-k$, since $|J_0|=k$, we have $|J\cap J_0|\geq\ell$. So
$|\textstyle\bigcup_{j\in J}C_M(j)|\geq|\textstyle\bigcup_{j\in
J\cap J_0}C_M(j)|\geq|J\cap J_0|\geq\ell$. By \eqref{M-clmn-supp},
we have $\xi_M(\ell+n-k)\geq\ell$. Thus, we proved that
$\delta'=n-k$ also satisfies the condition that
$\xi_M(\ell+\delta')\geq\ell, ~ \forall \ell\in[k]$.

Note that $\delta_0'$ is the smallest number that satisfies the
condition that $\xi_M(\ell+\delta_0')\geq\ell, ~ \forall
\ell\in[k]$. So $\delta_0'\leq n-k$.

\textbf{Finally}, we prove $\delta_0'\geq 0$.

By \eqref{M0-clmn-supp}, there exists a $J\subseteq[t]$ such that
$|J|= i_0$ and $\xi_{M_0}( i_0)=
\left|\textstyle\bigcup\limits_{j\in
J}C_{M_0}(j)\right|\leq\textstyle\sum\limits_{j\in
J}|C_{M_0}(j)|=\textstyle\sum\limits_{j\in J}|S_j|= i_0\alpha\leq
 i_0\beta$. So by Claim 2,
$i_0\beta-\xi_{M}( i_0\beta)= i_0\beta-\xi_{M_0}( i_0)\geq 0.$

Thus, we proved that $0\leq\delta_0'\leq n-k$ and $\delta_0'$ is
the smallest number that satisfies the condition that
$\xi_M(\ell+\delta_0')\geq\ell, ~ \forall \ell\in[k]$. By
\eqref{dlt-min-ell}, we have $\delta_0=\delta_0'=
i_0\beta-\xi_{M_0}( i_0)$.
\end{proof}

\section{Proof of Lemma \ref{mat-contr}}

\begin{proof}[Proof of Lemma \ref{mat-contr}]
Since $k-r\leq{t\choose s}$, we can construct a $k\times t$ binary
matrix $M_0=(m_{i,j})$ such that: 1) $R_{M_0}(i)$,
$i=1,\cdots,k-r$, are mutually different and $|R_{M_0}(i)|=s$; 2)
$|R_{M_0}(i)|=s+1, i=k-r+1,\cdots,k$. Since $t\alpha=sk+r$ and
$0\leq r\leq k-1$, the total number of $1$s in $M_0$ is
$$N_{\text{one}}=(k-r)s+r(s+1)=ks+r=t\alpha.$$ Clearly, $M_0$
satisfies condition (ii). We can further modify $M_0$ properly so
that it satisfies conditions (i) and (ii).

Suppose there is a $j_1\in[t]$ such that $|C_{M_0}(j_1)|<\alpha$.
Since the total number of ones in $M$ is $N_{\text{one}}=t\alpha$,
there exists a $j_2\in[t]$ such that $|C_{M_0}(j_2)|>\alpha$. We
shall modify $M_0$ so that $|C_{M_0}(j_1)|$ increases by one and
$|C_{M_0}(j_2)|$ decreases by one. To do this, let
$$I_1=\{i; 1\leq i\leq k-r, m_{i,j_1}=1 \text{~and~} m_{i,j_2}=0\}$$
and $$I_2=\{i; 1\leq i\leq k-r, m_{i,j_1}=0 \text{~and~}
m_{i,j_2}=1\}.$$ Then clearly, $I_1\cap I_2=\emptyset$ and
$m_{i,j_1}=m_{i,j_2}$ for all
$i\in\{1,\cdots,k-r\}\backslash(I_1\cup I_2)$. We have the
following two cases:

Case 1: There is an $i\in\{k-r+1,\cdots,k\}$ such that
$m_{i,j_1}=0, m_{i,j_2}=1$ and $|R_{M_0}(i)|=s+1$. Then we modify
$M$ by letting $m_{i,j_1}=1, m_{i,j_2}=0$. Then $|C_{M_0}(j_1)|$
increases by one and $|C_{M_0}(j_2)|$ decreases by one. Moreover,
it is easy to see that $M_0$ still satisfies condition (ii).

Case 2: For all $i\in\{k-r+1,\cdots,k\}$, $m_{i,j_2}=1$ implies
$m_{i,j_1}=1$. Note that $|C_{M_0}(j_1)|<\alpha<|C_{M_0}(j_2)|$,
then we have $|I_1|<|I_2|$. For each $\ell\in I_2$, we modify
$M_0$ by letting $m_{\ell,j_1}=1$, $m_{\ell,j_2}=0$ and the other
entries of $M_0$ remain unchanged. Denote the resulted matrix by
$M_\ell$. Then $|C_{M_\ell}(j_1)|$ increases by one and
$|C_{M_\ell}(j_2)|$ decreases by one. If there is an $\ell\in I_2$
such that $M_\ell$ does not satisfy condition (ii), it must be
that $R_{M_\ell}(\ell)=R_{M_\ell}(\ell')$ for some $\ell'\in I_1$.
Moreover, all such $\ell$s and $\ell'$s are in one to one
correspondence. Note that $|I_1|<|I_2|$. Then there exists an $
i_0\in I_2$ such that $M_{ i_0}$ satisfies condition (ii). So we
can let $M_0$ be $M_{ i_0}$.

We can perform the above operation continuously until each column
of $M_0$ has weight $\alpha$. Thus, we obtain a matrix $M_0$ that
satisfies conditions (i) and (ii).
\end{proof}

\end{document}